\newtheorem{lemma}{{\bf Lemma}}
\newtheorem{theorem}{{\bf Theorem}}
\newcommand{\real}[1]{\mathrm{I \! R} \mathit{^{#1}}}
\newcommand{\trans}{^{\mbox{\tiny {\sf T}}}}
\newcommand{\argmax}{\mathrm{argmax}}
\newcommand{\argmin}{\mathrm{argmin}}
\newcommand {\expect}{\mbox{E}}
\newcommand {\identity}{\mathrm{1}}
\newcommand{\Wbf}{{\bm W}}
\newcommand{\Xbf}{{\bm X}}
\newcommand{\xbf}{{\bm x}}
\newcommand{\zerobf}{{\mathbf 0}}
\newcommand{\onebf}{{\mathbf 1}}
\newcommand{\greekbold}[1]{\mbox{\boldmath $#1$}}
\newcommand{\betabf}{\greekbold{\beta}}
\newcommand{\gammabf}{\greekbold{\gamma}}
\newcommand{\phibf}{\greekbold{\phi}}
\newcommand{\thetabf}{\greekbold{\theta}}
\def\inprob{\,{\buildrel p \over \rightarrow}\,}
\def\indist{\,{\buildrel d \over \rightarrow}\,}
\begin{document}

\title{Robust regression for optimal individualized treatment rules} \vspace{0.1in}
\author{Wei Xiao$^\dagger$, Hao Helen Zhang$^\ddagger$, and Wenbin Lu$^\dagger$ \vspace{0.1in}\\
\small{$^\dagger$\textit{Department of Statistics, North Carolina State University, Raleigh, NC 27695}}\\
\small{$^\ddagger$\textit{Department of Mathematics, University of Arizona, Tucson, AZ 85721}}\\
\footnotesize{wxiao@ncsu.edu \quad hzhang@math.arizona.edu \quad lu@stat.ncsu.edu}}
\date{}
\maketitle

\baselineskip=20pt

\begin{quotation}
\noindent {\it Abstract:}
Because different patients may response quite differently to the same drug or treatment, there is increasing interest in discovering individualized treatment rule. In particular, people are eager to find the optimal individualized treatment rules, which if followed by the whole patient population would lead to the ``best'' outcome. In this paper, we propose new estimators based on robust regression with general loss functions to estimate the optimal individualized treatment rules. The new estimators possess the following nice properties: first, they are robust against skewed, heterogeneous, heavy-tailed errors or outliers; second, they are robust against misspecification of the baseline function; third, under certain situations, the new estimator coupled with pinball loss approximately maximizes the outcome's conditional quantile instead of conditional mean, which leads to a different optimal individualized treatment rule comparing with traditional Q- and A-learning. Consistency and asymptotic normality of the proposed estimators are established. Their empirical performance is demonstrated via extensive simulation studies and an analysis of an AIDS data.\par

\vspace{9pt}
\noindent {\it Key words and phrases:} Optimal individualized treatment rules; Personalized medicine; Quantile regression; Robust regression.
\par
\end{quotation}\par

\newpage
\section{Introduction}

Given the same drug or treatment, different patients may respond quite differently. Factors causing individual variability in drug response are multi-fold and complex. This has raised increasing interests of individualized medicine, where customized medicine or treatment is recommended to each individual according to his/her characteristics, including genetic, physiological, demographic, environmental, and other clinical information. The rule that applied in personalized medicine to match each patient with a target treatment is called individualized treatment rule (ITR), and our goal is to find the ``optimal'' one, which if followed by the whole patient population would lead to the ``best'' outcome. For many complex diseases such as cancer and AIDS, the optimal individualized treatment rule or regime is a dynamical treatment process, involving a sequence of treatment decisions made at different time points throughout the disease evolving course.

Q-learning \citep{watkins1992q,murphy2005generalization} and A-learning \citep{murphy2003optimal,robins2004optimal} are two main approaches for finding optimal dynamic individualized treatment rules based on clinical trials or observational data. Q-learning is based on posing a regression model to estimate the conditional expectation of the outcome at each time point, and then applying a backward recursive procedure to fit the model.  A-learning, on the other hand, only requires modeling the contrast function of the treatments at each time point, is therefore more flexible and robust to a model misspecification. See \cite{schulte2014q} for a complete review and comparison of these two methods under various scenarios, in terms of the parameter estimation accuracy and the estimation of expected outcomes. Q- and A-learning have good performance when model is correctly specified but are sensitive to model misspecification. To overcome this shortcoming, several ``direct'' methods have been proposed, which maximize value functions directly instead of modeling the conditional mean. See \cite{ZhaoYingQi2012OWL,Zhang2013Robust} for example.  

All existing methods for optimal individualized treatment rule estimation, including Q-learning and A-learning, belong to mean regression as they estimate the optimal estimator by maximizing expected outcomes. In the case of single decision point, Q-learning is equivalent to the least-squares regression. Least-squares estimates are optimal if the errors are i.i.d. normal random variables. However, skewed, heavy-tailed, heteroscedastic errors or outliers of the response are frequently encountered. In such situations, the efficiency of the least square estimates is impaired. One extreme example is that when the response takes i.i.d. Cauchy errors, neither Q-learning nor A-learning can consistently estimate the optimal ITR. For example, in  AIDS Clinical Trials Group Protocol 175 (ACTG175) data \citep{hammer1996trial}, HIV-infected subjects were randomized to four regimes with equal probabilities, and our objective is to find the optimal ITR for each patient based on their age, weight, race, gender and some other baseline measurements. The response CD4 count of the data follows a skewed, heteroscedastic errors, which weakens the efficiency of classical Q- and A-learning. A method to estimate optimal ITR which is robust against skewed, heavy-tailed, heteroscedastic errors or outliers is highly valuable. One possible solution is to construct the optimal decision rule based on the conditional median or quantiles of response given covariates than based on average effects.

In the following, we present a simple example where a quantile-based decision rule is more preferable than a mean-based decision rules. We use higher value of response $Y$ to indicate more favorable outcomes. Figure~\ref{fig:plot1} plots the conditional density of $Y$ under two treatments, $A$ and $B$, given a binary covariate $X$ which takes the value of male and female. Under the comparison based on conditional means, $A$ and $B$ are exactly equivalent. However, conditional quantiles provide us more insight. For the male group, the conditional distribution of response given treatment $B$ is a log-normal and skewed to the right. Therefore, treatment $B$ is less favorable when either 50\% or 25\% conditional quantile are considered. For the female group, the conditional distribution of response given treatment $A$ is a standard normal while a Cauchy distribution given treatment $B$. Therefore, if we make a comparison based on $25\%$ conditional quantile, treatment $A$ is more favorable.

\begin{figure}[ht]
	\centerline{\includegraphics[width=6in]{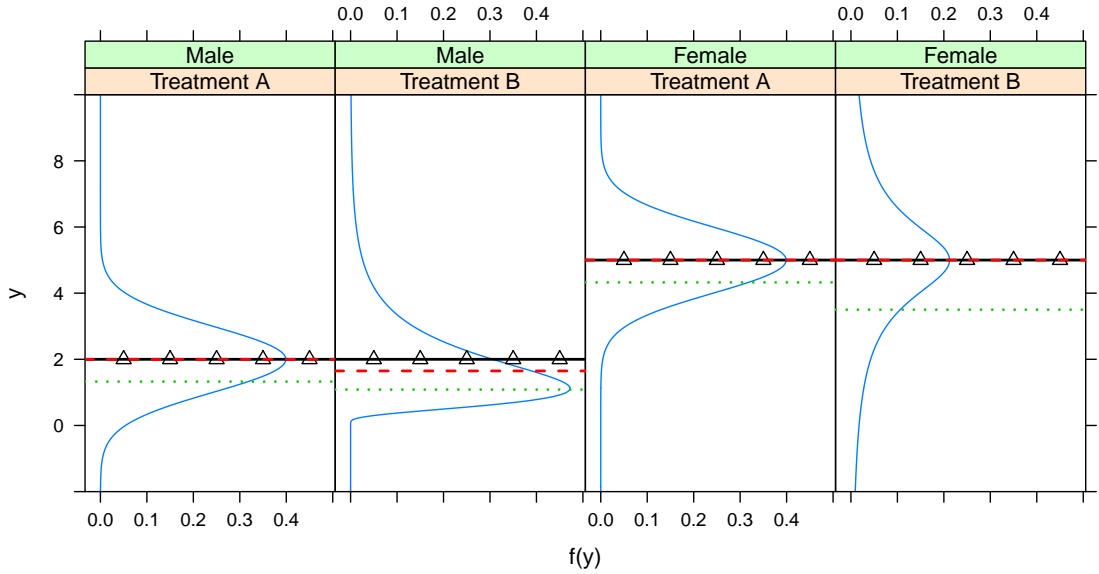}}
	\caption{The distribution functions of the response $Y$, in a randomized clinical trial with two treatments, $A$ and $B$, for male (two panels on the left) and female (two panels on the right). The solid lines with triangle symbol, dashed line, and dotted lines are the conditional mean, $50\%$ quantile, and $25\%$ quantile functions of $Y$ given the gender and the treatment, respectively.}
	\label{fig:plot1}
\end{figure}

In this paper, we propose a general framework for optimal individualized treatment rule estimation based on robust regression, including quantile regression and the regression based on Huber's loss and $\epsilon$-insensitive loss. The proposed methodology has the following desired features. First, the new decision rule obtained by maximizing the conditional quantile, which is suitable for skewed, heavy-tailed errors or outliers. Second, the proposed estimator requires only modeling the contrast function between two treatments, and is therefore robust against misspecification of the baseline function. This property is shared by A-learning. Third, empirical results from our comprehensive numerical study suggest favorable performance of the new robust regression estimator.

The rest of the paper is organized as follows. In Section 2, we first review the classical Q- and A- learning methods. Then we propose the new procedure and method and discuss its connection with existing methods. In Section 3, we study and prove the asymptotic properties of the proposed method, including consistency and asymptotic normality. In Section 4, a comprehensive numerical study is conducted to assess finite sample performance of the new procedure. In Section 5, we apply the method to ACTG175 data. Concluding remarks are given in Section 6. Throughout the paper,  we use upper case letters to denote random variables and lower case letters to denote their values.

\section{New Optimal Treatment Estimation Framework: Robust Regression}
\subsection{Basic Notations and Assumptions}
For simplicity, we consider a single stage randomized clinical trial with two treatments. For each patient, the observed data is $(\Xbf,A,Y)$, where $\Xbf\in\mathcal{X}=\real{p}$ denotes the baseline covariates, $A\in\mathcal{A}=\{0,1\}$ denotes the treatment assigned to the patient, and $Y$ is the real-valued response, which is coded so that higher values indicate more favorable clinical outcomes.  An ITR $g$ is a function mapping from $\mathcal{X}$ to $\mathcal{A}$.

We first review the potential outcome framework \citep{neyman1923applications,rubin1974estimating,rubin1986comment}. The potential outcome $Y^*(a)$ is the outcome for an arbitrary individual has s/he received treatment $a$. In actuality, at most one of the potential outcomes can be observed for any individual. The optimal ITR under mean regression, which maximizes the expected outcome, is $g^{\mathrm{opt}}_\mu=\argmax_{g\in\mathcal{G}}\expect[Y^*\{g(\Xbf)\}]$. Define the propensity score
$\pi(\Xbf)\triangleq P(A=1|\Xbf)$. Following \cite{rubin1974estimating} and \cite{rubin1986comment}, we can compute the expectation of the potential outcome under the following two key assumptions.
\begin{itemize}
	\item[(C1)] \textbf{Stable Unit Treatment Value Assumption (SUTVA):} a patient's observed outcome is the same as the potential outcome for the treatment that s/he actually received. Based on \cite{rubin1986comment}, the SUTVA assumption implies that the value of the potential outcome for a subject does not depend on what treatments other subject receive. Specifically, we can write the SUTVA assumption as
	\begin{equation}
	Y_i=Y_i^*(1)A_i+Y_i^*(0)(1-A_i),\;i=1,\ldots,n.
	\end{equation}
	This is also referred as consistency assumption.
	\item[(C2)] \textbf{Strong Ignorability Assumption:} the treatment assignment $A$ for an individual is independent of the potential outcomes conditional on the covariates $\Xbf$, i.e., $A\bot \{Y^*(a)\}_{a\in\mathcal{A}}|\Xbf$. For a randomized clinical trial, this assumption is satisfied automatically. For an observational study, as clinicians make decisions based only on all past available information, this assumption essentially assumes no unmeasured confounders.
\end{itemize}

For consistent estimation of the optimal treatment rule, we also need to assume
\begin{itemize}
	\item[(C3)] \textbf{Positivity Assumption:} $0<\pi(\xbf)<1$, $\forall \xbf\in\mathcal{X}$.
\end{itemize}

\subsection{Existing Learning Methods: Q-learning and A-learning}
Define the Q-function $Q(\xbf,a)\triangleq\expect(Y|\xbf,a)$. Under assumptions (C1)-(C2), one can show that $g^{\mathrm{opt}}_\mu(\Xbf)=\argmax_{a\in\mathcal{A}}Q(\xbf,a)=\argmax_{a\in\mathcal{A}}\expect(Y|\Xbf,A=a)$. This suggests that, in order to find $g^{\mathrm{opt}}_\mu$, we only need to estimate the conditional expectation of $Y$ given $(\Xbf,A)$. This result serves as the foundation of Q- and A-learning framework. We further define the value function $V_\mu(g)=\expect_{\Xbf}[Q\{\Xbf,g(\Xbf)\}]$ which is simply the marginal mean outcome under the ITR $g$, and $g^{\mathrm{opt}}_\mu=\argmax_gV_\mu(g)$.

Define the $\tau$-th conditional quantile of $Y$ given $(\Xbf,A)$ as $Q_{\tau}(\Xbf,A)\triangleq\inf\{y: F_{Y|\Xbf,A}(y)\geq\tau\}$. Then we define the value function based on the $\tau$-th conditional quantile as $V_{\tau-q}(g)=\expect_{\Xbf}[Q_{\tau}\{\Xbf,g(\Xbf)\}]$, which is an analog to the definition of $V_\mu(g)$. The optimal ITR which maximizes the $\tau$-th conditional quantile is then defined as
\begin{equation}
g^{\mathrm{opt}}_{\tau}(\xbf)=\underset{a\in\mathcal{A}}\argmax Q_{\tau}(\xbf,a),\;\tau\in[0,1],
\end{equation}
and $g^{\mathrm{opt}}_\tau=\argmax_gV_{\tau-q}(g)$.

Consider the general model $\expect(Y|\Xbf,A)=h_0(\Xbf)+AC_0(\Xbf)$, where $h_0(\Xbf)$ represents the baseline effect, and $C_0(\Xbf)$ denotes the contrast effect as
$$C_0(\Xbf)=\expect(Y|\Xbf,A=1)-\expect(Y|\Xbf,A=0).$$
Therefore, $g^{\mathrm{opt}}_\mu(\Xbf)=\identity\{C_0(\Xbf)>0$\}. In Q-learning, a parametric model is often employed as a working model,
\begin{equation}
\expect(Y|\Xbf,A)=h(\Xbf; \gammabf)+AC(\Xbf;\betabf),
\end{equation}
where $h(\Xbf;\gammabf)$ and $C(\Xbf;\betabf)$ are posited parametric models for $h_0(\Xbf)$ and $C_0(\Xbf)$ respectively. Commonly a linear model is assumed for simplicity and interpretability, i.e., $h(\Xbf;\gammabf)=\gammabf\trans\tilde{\Xbf}$ and $C(\Xbf;\betabf)=\betabf\trans\tilde{\Xbf}$, where $\tilde{\Xbf}=(\onebf,\Xbf\trans)\trans$. Given the observation $\{(Y_i,\Xbf_i,A_i);\;i=1,\ldots,n\}$, the Q-learning procedure estimates the parameters $(\betabf,\gammabf)$ by minimizing the squared error loss
\begin{equation}
L_{1n}(\betabf,\gammabf)=\frac{1}{n}\sum_{i=1}^{n}\left\{Y_i-h(\Xbf_i;\gammabf)-A_iC(\Xbf_i;\betabf)\right\}^2.
\end{equation}
Denote the optimized point as $(\hat{\betabf}^Q,\hat{\gammabf}^Q)$. The estimated optimal ITR based on Q-learning is then $\hat{g}^{Q}(\xbf)\triangleq\identity\{C(\xbf;\hat{\betabf}^Q)>0\}$, which is a consistent estimator of $g^{\mathrm{opt}}_\mu(\xbf)$ if both $h(\Xbf;\gammabf)$ and $C(\Xbf;\betabf)$ are correctly specified.

A-learning is a semiparametric improvement of Q-learning by modeling only the contrast function $C_0(\Xbf)$ rather than the full Q-function. This is reasonable based on the observation that the optimal ITR $g^{\mathrm{opt}}_\mu$ only depends on $C_0(\Xbf)$. By positing $C(\Xbf;\betabf)$ for the contrast function, in A-learning, one can estimate coefficients $\betabf$ by solving the following estimating equation
\begin{equation}
\sum_{i=1}^{n}\lambda(\Xbf_i)\left\{A_i-\pi(\Xbf_i)\right\}\left\{Y_i-A_iC(\Xbf_i;\betabf)-h(\Xbf_i)\right\}=0,
\label{eq:eeofAlearn}
\end{equation}
where $\lambda(\Xbf_i)$ and $h(\Xbf_i)$ are arbitrary functions, and $\lambda(\Xbf_i)$ has the same dimension as $\betabf$. Denote the solution to \eqref{eq:eeofAlearn} by $\hat{\betabf}^{A}$. If $\mathrm{var}(Y|X)$ is constant and $C(\Xbf_i;\betabf)$ is correctly specified, the optimal choices of $\lambda(\cdot)$ and $h(\cdot)$ are $\lambda(\Xbf_i;\betabf)=\partial/\partial \betabf C(\Xbf_i;\betabf)$ and $h(\Xbf_i)=h_0(\Xbf_i)$ \citep{robins2004optimal}. In practice, one may pose models, say $\pi(\Xbf_i;\phibf)$ and $h(\Xbf_i;\gammabf)$ for $\pi(\Xbf_i)$ and $h(\Xbf_i)$ respectively, and take $\lambda(\Xbf_i;\betabf)=\partial/\partial \betabf C(\Xbf_i;\betabf)$. Under randomized designs, the propensity score $\pi(\Xbf_i)$ is known. Otherwise, a logistic model can be proposed. Under the assumption that $C(\Xbf;\betabf)$ is correctly specified, the double robustness property of A-learning states that as long as one of $\pi(\Xbf;\phibf)$ and $h(\Xbf;\gammabf)$ is correctly specified, $\hat{g}^{A}(\xbf)\triangleq\identity\{C(\xbf;\hat{\betabf}^A)>0\}$ is consistent estimator of $g^{\mathrm{opt}}_\mu(\xbf)$.

Recently, \cite*{lu2011variable} propose a variant of A-learning by a loss-based learning framework. Rewrite
\begin{align*}
\expect(Y|\Xbf,A) =& h_0(\Xbf)+AC_0(\Xbf)\\
=& \varphi_0(\Xbf)+\{A-\pi(\Xbf)\}C_0(\Xbf),
\end{align*}
where $\varphi_0(\Xbf)=h_0(\Xbf)+\pi(\Xbf)C_0(\Xbf)$. Based on the expression above,  \cite{lu2011variable} propose to estimate $(\betabf,\gammabf)$ by minimizing the following loss function
\begin{equation}
L_{2n}(\betabf,\gammabf)=\frac{1}{n}\sum_{i=1}^{n}\left[Y_i-\varphi(\Xbf_i;\gammabf)-\{A_i-\pi(\Xbf_i)\}C(\Xbf_i;\betabf)\right]^2,
\label{eq:A-loss}
\end{equation}
where $\varphi(\Xbf;\gammabf)$, $C(\Xbf;\betabf)$ are proposed models for $\varphi_0(\Xbf)$ and $C_0(\Xbf)$ respectively. Denote the minimizer of \eqref{eq:A-loss} as $(\hat{\betabf}^A_{LS},\hat{\gammabf}^{A}_{LS})$. \cite{lu2011variable} show that $\hat{g}^{A}_{LS}(\xbf)\triangleq\identity\{C(\xbf;\hat{\betabf}^A_{LS})>0\}$ is a consistent estimator of $g^{\mathrm{opt}}_\mu(\xbf)$ when the propensity score $\pi(\Xbf)$ is known or can be consistently estimated from the data, and $C(\Xbf;\betabf)$ is correctly specified. We refer to this method as least square A-learning (lsA-learning).

One main advantage of the lsA-learning, compared to the classical A-learning, is its square loss, making the procedure easy to be coupled with penalized regression to achieve variable selection in high dimensional data. Specifically, \cite{lu2011variable} propose to identify important nonzero coefficients in $\betabf$ by applying an adaptive LASSO penalty to \eqref{eq:A-loss}. Under some regularity conditions, both the selection consistency and asymptotic normality of the estimator are established in \cite{lu2011variable}. The downside of lsA-learning is that one direction of the double robustness property of the classical A-learning is lost, i.e., when $\varphi(\Xbf;\gammabf)$ is correctly specified, $\betabf$ may still not be consistent if the propensity score $\pi(\Xbf)$ is not consistently estimated. Finally, it can be shown that lsA-learning and Q-learning are equivalent when $\pi(\Xbf)$ is constant and both $\varphi(\Xbf;\gammabf)$ and $C(\Xbf;\betabf)$ take the linear form (with the space of $C(\Xbf;\betabf)$ included in the space of $\varphi(\Xbf;\gammabf)$). Similar properties hold for A-learning and Q-learning \citep{schulte2014q}.

\subsection{New Proposal: Robust Regression}
Skewed, heavy-tailed, heteroscedastic errors or outliers of the response $Y$ are frequently encountered in clinical trials. It is well known that ordinary least square estimation fails to produce a reliable estimator in such situations. The immediate consequence is the efficiency loss in the estimators produced by Q-, A-, and lsA-learning. This motivates us to adopt robust regression techniques in optimal treatment regime estimation.

We consider the following additive model,
\begin{equation}
Y_i=\varphi_0(\Xbf_i)+\{A_i-\pi(\Xbf_i)\}C(\Xbf_i;\beta_0)+\epsilon_i,\; i=1,\ldots,n,
\label{eq:model_nointeraction}
\end{equation}
where $\varphi_0(\Xbf)$ is the baseline function, $C(\Xbf;\beta_0)$ is the contrast function, $\pi(\Xbf)$ is the propensity score, and $\epsilon$ is the error term which satisfies the conditional independence assumption $\epsilon\perp A|\Xbf$. We point out that the error term defined in \eqref{eq:model_nointeraction} can be very general. For example, we could take $\epsilon=\sum_{j=1}^{K}\sigma_j(\Xbf)e_j$ for any $K\geq 1$ that allows the error distribution to change with $\Xbf$, used to model heterogeneous errors, where $\sigma_j(\Xbf)$ are arbitrary positive functions and  $e_{j}\perp (A,\Xbf)$ for all $j=1,\ldots,K$. Throughout the paper, we assume $\{(Y_i,\Xbf_i,A_i,\epsilon_i),i=1,\ldots,n\}$ are i.i.d random samples of the population.

We propose to estimate $(\betabf, \gammabf)$ by minimizing
\begin{equation}
L_{3n}(\betabf, \gammabf)=\frac{1}{n}\sum_{i=1}^{n}M\left[Y_i-\varphi(\Xbf_i;\gammabf)-\{A_i-\pi(\Xbf_i)\}C(\Xbf_i;\betabf)\right],
\label{eq:A-general-loss}
\end{equation}
where $\gammabf\in\Gamma$, $\betabf\in\mathcal{B}$ and $M:\real \rightarrow [0,\infty)$ is a convex function with minimum achieved at 0. Denote the minimizer of \eqref{eq:A-general-loss} as $(\hat{\betabf}^R_{M},\hat{\gammabf}^{R}_{M})$, and the estimated ITR is then $\hat{g}^{R}_M(\xbf)\triangleq\identity\{C(\xbf;\hat{\betabf}^R_M)>0\}$. In the following, we refer the robust regression with loss function $M(x)$ as RR(M)-learning. In this article, we consider the following three types of loss functions, i.e., the pinball loss
\begin{equation}
M(x)=\rho_\tau(x)\triangleq
\begin{cases}
(\tau-1)x, &\text{if } x<0\\
\tau x, &\text{if } x\geq0
\end{cases}
\end{equation}
where $0<\tau<1$, the Huber loss
\begin{equation}
M(x)=H_\alpha(x)\triangleq
\begin{cases}
0.5x^2, &\text{if } |x|<\alpha\\
\alpha|x|-0.5\alpha^2, &\text{if } |x|\geq\alpha
\end{cases}
\end{equation}
for some $\alpha>0$, and the $\epsilon$-insensitive loss
\begin{equation}
M(x)=J_\epsilon(x)\triangleq\max(0, |x|-\epsilon)
\end{equation}
for some $\epsilon>0$. The pinball loss are frequently applied for quantile regression \citep{koenker2005quantile}, and the Huber losses and the $\epsilon$-insensitive are robust against heavy tailed errors or outliers. A dramatic difference of pinball loss, Huber loss and $\epsilon$-insensitive loss, compared with the square loss, is that they penalize large deviances linearly instead of quadratically. This property makes them more robust when dealing with responses with non-normal type of errors.

\section{Asymptotic Properties}
\label{section:asymptotic}
\subsection{Consistency of Robust Regression:  Pinball Loss}

Under the conditional independence assumption $\epsilon\perp A|\Xbf$, we have
\begin{align*}
Q(\Xbf,A)=& \varphi_0(\Xbf)+\{A-\pi(\Xbf)\}C(\Xbf;\beta_0)+\mu_\epsilon(\Xbf);\\
Q_\tau(\Xbf,A)=& \varphi_0(\Xbf)+\{A-\pi(\Xbf)\}C(\Xbf;\beta_0)+F^{-1}_{\epsilon}(\Xbf;\tau).
\end{align*}
where $\mu_{\epsilon}(\Xbf)$ and $F^{-1}_{\epsilon}(\Xbf;\tau)$ denote the mean and the $\tau$-th quantile of $\epsilon$ conditional on $\Xbf$ respectively. Therefore, in this situation, we have $g^{\mathrm{opt}}_\mu=g^{\mathrm{opt}}_{\tau}=\identity\{C(\Xbf;\betabf_0)>0\}$. In other words, the underlying ITR which maximize the population mean and $\tau$-th quantile are equivalent. For a good ITR $\hat{g}=\identity\{C(\Xbf;\hat{\betabf})>0\}$, it is reasonable to require $\hat{\betabf}$ to be a consistent estimator of $\betabf_0$. This  consistency result is first shown for the robust regression with pinball loss, which is given in Theorem 1. We allocate all the proofs into the Appendix A.

\begin{theorem}
	Under regularity conditions (A1)-(A8) in the Appendix A, if the contrast function in \eqref{eq:model_nointeraction} is correctly specified and $\pi(\xbf)$ is known, then $\hat{\betabf}^{R}_{\rho(\tau)}\inprob\betabf_0$ for all $\tau\in(0,1)$, where $\hat{\betabf}^{R}_{\rho(\tau)}$ is the solution of \eqref{eq:A-general-loss} when $M(x)=\rho_{\tau}(x)$.
\end{theorem}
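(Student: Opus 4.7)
My plan is to establish consistency via the standard program for convex $M$-estimators: identify the $\betabf$-component of the minimizer of the population objective, then transfer to the sample via uniform convergence and a convexity argument.

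Define the population objective
$$L(\betabf,\gammabf)=E\{\rho_\tau[Y-\varphi(\Xbf;\gammabf)-(A-\pi(\Xbf))C(\Xbf;\betabf)]\}.$$
The conceptual core of the proof is to show that $\betabf\mapsto L(\betabf,\gammabf)$ is minimized at $\betabf_0$ for \emph{every} fixed $\gammabf\in\Gamma$, regardless of whether $\varphi(\cdot;\gammabf)$ recovers the true baseline $\varphi_0$. Substituting model \eqref{eq:model_nointeraction} into the residual at $\betabf=\betabf_0$ gives $W(\gammabf)\equiv\varphi_0(\Xbf)-\varphi(\Xbf;\gammabf)+\epsilon$, which depends only on $(\Xbf,\epsilon)$; the conditional-independence assumption $\epsilon\perp A\mid\Xbf$ thus yields $\psi_\tau(W(\gammabf))\perp A\mid\Xbf$, where $\psi_\tau(x)=\tau-I(x<0)$ is the subgradient of the pinball loss. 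Combining this with the score-orthogonality $E[A-\pi(\Xbf)\mid\Xbf]=0$ gives
$$E\{\psi_\tau(W(\gammabf))(A-\pi(\Xbf))\mid\Xbf\}=E[\psi_\tau(W(\gammabf))\mid\Xbf]\cdot E[A-\pi(\Xbf)\mid\Xbf]=0,$$
so that $\partial_{\betabf}L(\betabf_0,\gammabf)=-E\{\psi_\tau(W(\gammabf))(A-\pi(\Xbf))\,\partial_{\betabf}C(\Xbf;\betabf_0)\}=0$ for every $\gammabf$. Since $\rho_\tau$ is convex and its argument is affine in $(\betabf,\gammabf)$ under the linear parameterization, $L$ is convex on $\mathcal{B}\times\Gamma$, so this first-order condition upgrades to a global minimum in $\betabf$.

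Uniqueness of the $\betabf$-minimizer should follow from the second-order regularity encoded in conditions (A5)--(A7), which I expect reduces to positive definiteness of the Hessian-like matrix $E\{f_{\epsilon}(F^{-1}_{\epsilon}(\Xbf;\tau)\mid\Xbf)\pi(\Xbf)(1-\pi(\Xbf))[\partial_{\betabf}C(\Xbf;\betabf_0)]^{\otimes 2}\}$, a matrix rendered non-singular by the positivity (C3) and a non-degeneracy condition on $\partial_{\betabf}C$. With compactness of $\mathcal{B}\times\Gamma$, moment bounds and continuity packaged into the remaining conditions, a uniform law of large numbers gives $L_{3n}(\betabf,\gammabf)\to L(\betabf,\gammabf)$ uniformly on compacta, and the standard convexity lemma for convex stochastic processes (in the style of Pollard 1991 and Hjort--Pollard 1993) then delivers $\hat{\betabf}^R_{\rho(\tau)}\inprob\betabf_0$.

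The principal obstacle is precisely the identification step under a possibly misspecified $\varphi$: because $\gammabf$ need not be identifiable in the population, one cannot simply argue via strict convexity of $L$ on the joint parameter space. The orthogonality $E[A-\pi(\Xbf)\mid\Xbf]=0$ supplies the remedy, decoupling the $\betabf$-score from any bias in $\varphi(\cdot;\gammabf)$ and making $\betabf_0$ a stationary point of $L(\cdot,\gammabf)$ \emph{uniformly} in $\gammabf$. This is the pinball-loss analogue of the $\gammabf$-orthogonality that lsA-learning inherits from classical A-learning, and it is the reason the theorem requires only correct specification of the contrast $C(\Xbf;\betabf)$ and knowledge of $\pi(\Xbf)$, not correct specification of $\varphi$.
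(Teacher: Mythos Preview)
Your conceptual core is exactly right and matches the paper's: the orthogonality $E[A-\pi(\Xbf)\mid\Xbf]=0$ combined with $\epsilon\perp A\mid\Xbf$ is precisely what makes $\betabf_0$ the population minimizer in $\betabf$ for \emph{every} $\gammabf$, decoupling identification of $\betabf_0$ from the possibly misspecified baseline. However, there is a genuine gap in your formulation. You define the population objective $L(\betabf,\gammabf)=E\{\rho_\tau[\cdots]\}$ and proceed as if it is finite, but under (A1)--(A8) it need not be: the assumptions place no moment condition on $\epsilon$ or $Y$, and the paper explicitly intends the result to cover Cauchy errors (see Remark 1 following the theorem). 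Since $\rho_\tau(x)\geq\min(\tau,1-\tau)|x|$, your $L$ can be $+\infty$ everywhere. The paper circumvents this by never defining $L$ at all: it works with the increment $L_{3n}(\betabf,\gammabf)-L_{3n}(\betabf_0,\gammabf')$, decomposed as $S_{1n}+S_{2n}$, and uses the Lipschitz bound $|\rho_\tau(x-y)-\rho_\tau(x)|\leq|y|$ (Lemma~1) together with (A3), (A5) to get finite pointwise limits $G_1(\betabf,\gammabf)$ and $G_2(\gammabf)$. Only these differences have well-defined expectations.

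There is also a methodological difference worth noting. Rather than your subgradient/first-order-condition route, the paper proves $G_1\geq 0$ by an explicit four-term algebraic identity for $\rho_\tau(K_1-K_2)-\rho_\tau(K_1)$ (Lemma~2): the two linear-in-$K_2$ terms vanish by the same orthogonality you identified, and the remaining two terms are manifestly nonnegative. This avoids any interchange of expectation and (sub)differentiation and handles the kink at zero cleanly. Finally, your speculation that uniqueness comes from a Hessian condition encoded in (A5)--(A7) is off: the paper obtains strict positivity of $G_1$ for $\betabf\neq\betabf_0$ directly from (A4) (nondegeneracy of $\Delta C$) and (A8) ($\epsilon\mid\Xbf$ has everywhere-positive density), with (A6) pinning down $\gammabf^*$ and (A7) supplying the convexity needed for the argmin continuous mapping.
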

\noindent \textbf{Remarks:}
\begin{enumerate}
	\item Theorem 1 doesn't assume the finiteness of $E(Y)$. Therefore it can be applied to the cases when $\epsilon_i$ follows a Cauchy distribution.
	\item After fitting the model, the Assumption (A2), $\epsilon\perp A|\Xbf$, can be verified by applying conditional independence test with $\hat{r}(\hat{\betabf},\hat{\gammabf})$ and $A$ given $\Xbf$, where $\hat{r}(\hat{\betabf},\hat{\gammabf})$ is the estimated residual and $\hat{r}(\hat{\betabf},\hat{\gammabf})=Y-\varphi(\Xbf;\hat{\gammabf})-\{A-\pi(\Xbf)\}C(\Xbf;\hat{\betabf})$. See \cite{lawrance1976conditional,su2007consistent,song2007testing,huang2010testing,zhang2012kernel} for more discussion of conditional independence hypothesis tests. In particular, we demonstrate the usefulness of the test by applying the Kernel-based conditional independence test (KCI-test, \cite{zhang2012kernel}) in Section 5. KCI-test doesn't assume functional forms among variables and thus suits our need.
\end{enumerate}

When the conditional independence assumption ($\epsilon\perp A|\Xbf$) does not hold, $\hat{\betabf}^{R}_{\rho(\tau)}$ may no longer be a consistent estimator of $\betabf_0$. This is intuitively reasonable as $\epsilon$ contains extra information with respect to $A$. In fact, a general result which can be derived in this case is that, $(\hat{\betabf}^{R}_{\rho(\tau)},\hat{\gammabf}^{R}_{\rho(\tau)})$ minimizes a weighed mean-square error loss function with specification error \citep*{angrist2006quantile,lee2013interpretation}.

Instead of assuming response $Y$ takes an additive error term $\epsilon$ as in \eqref{eq:model_nointeraction}, we assume the conditional quantile function $Q_\tau(\Xbf,A)=\varphi_0(\Xbf)+\{A-\pi(\Xbf)\}C(\Xbf;\betabf_0(\tau))$, where we redundantly represent the baseline function and contrast function as $\varphi_0(\cdot)$ and $C(\cdot)$ respectively. Notice that we use $\betabf_0(\tau)$ instead of $\betabf_0$ to emphasize that the true $\betabf$ may vary with respect to $\tau$. The proposed model is $\hat{Q}(\betabf,\gammabf)=\varphi(\Xbf;\gammabf)+\{A-\pi(\Xbf)\}C(\Xbf;\betabf)$ with $C(\Xbf;\betabf)$ correctly specified. Define
\begin{equation}
\label{eq:population_beta_tau}
\left(\betabf(\tau),\gammabf(\tau)\right)=\underset{\betabf\in\mathcal{B},\gammabf\in\Gamma}\argmin\expect
\left[\rho_{\tau}\{Y-\hat{Q}(\betabf,\gammabf)\}
-\rho_{\tau}\{Y-\hat{Q}(\betabf',\gammabf')\}\right]
\end{equation}
where $(\betabf',\gammabf')$ is any fixed point in $\mathcal{B}\times\Gamma$. Define the QR specification error as $\Delta_{\tau}(\Xbf,A;\betabf,\gammabf)\triangleq\hat{Q}(\betabf,\gammabf)-Q_\tau(\Xbf,A)$. Define the quantile-specific residual
as $\epsilon_{\tau}\triangleq Y-Q_\tau(\Xbf,A)$ with conditional density function $f_{\epsilon_{\tau}}(\cdot|\Xbf,A)$. Then we have the following approximation theorem. The proof of the theorem follows Theorem 1 of \cite{angrist2006quantile}, and is omitted for brevity.
\begin{theorem} Suppose that (i) the conditional density $f_{Y}(y|\Xbf,A)$ exists a.s.; (ii)$\expect[Q_{\tau}(\Xbf,A)]$ and $\expect[\Delta^2_{\tau}(\Xbf,A;\betabf,\gammabf)]$ are finite; (iii) $\left(\betabf(\tau),\gammabf(\tau)\right)$ uniquely solves \eqref{eq:population_beta_tau}. Then
	\begin{equation}
	\left(\betabf(\tau),\gammabf(\tau)\right)=\underset{\betabf, \gammabf}\argmin\expect[w_{\tau}(\Xbf,A;\betabf,\gammabf)\Delta^2_{\tau}(\Xbf,A;\betabf,\gammabf)]
	\end{equation}
	where
	\begin{equation}
	w_{\tau}(\Xbf,A;\betabf,\gammabf)=\int_{0}^{1}(1-u)f_{\epsilon_{\tau}}(u\Delta_{\tau}(\Xbf,A;\betabf,\gammabf)|\Xbf,A)du.
	\end{equation}
	\label{thm:approximation_quantile}
\end{theorem}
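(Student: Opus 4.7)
The plan is to follow the Knight decomposition strategy used in \cite{angrist2006quantile}: rewrite the gap in pinball losses as a linear term plus a quadratic remainder, take conditional expectations given $(\Xbf,A)$, and identify the remainder with $w_\tau \Delta_\tau^2$.

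First, fixing any $(\betabf,\gammabf)\in\mathcal{B}\times\Gamma$, I would abbreviate $\Delta=\Delta_\tau(\Xbf,A;\betabf,\gammabf)=\hat{Q}(\betabf,\gammabf)-Q_\tau(\Xbf,A)$ and $\epsilon=\epsilon_\tau=Y-Q_\tau(\Xbf,A)$, and apply Knight's identity for the check function,
\begin{equation*}
\rho_\tau(\epsilon-\Delta)-\rho_\tau(\epsilon)=-\Delta\bigl[\tau-\identity\{\epsilon<0\}\bigr]+\int_{0}^{\Delta}\bigl[\identity\{\epsilon\leq s\}-\identity\{\epsilon\leq 0\}\bigr]\,ds,
\end{equation*}
which holds for every real $\Delta$ with the convention that $\int_{0}^{\Delta}$ reverses sign when $\Delta<0$. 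Conditioning on $(\Xbf,A)$, the linear term has expectation $-\Delta\{\tau-P(\epsilon<0\mid\Xbf,A)\}=0$, because $Q_\tau(\Xbf,A)$ is the true conditional $\tau$-quantile and the density $f_Y(\cdot\mid\Xbf,A)$ exists by (i). The remainder, after invoking Fubini (justified by (ii)) and the substitution $r=u\Delta$, $dr=\Delta\,du$, collapses to
\begin{equation*}
\Delta^{2}\int_{0}^{1}(1-u)\,f_{\epsilon_{\tau}}(u\Delta\mid\Xbf,A)\,du=w_\tau(\Xbf,A;\betabf,\gammabf)\,\Delta_\tau^{2}(\Xbf,A;\betabf,\gammabf),
\end{equation*}
and the algebra is uniform in the sign of $\Delta$ because the orientation reversal of the outer integral cancels the sign flip in the inner indicator difference.

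Taking a further expectation over $(\Xbf,A,Y)$ then yields
\begin{equation*}
\expect\bigl[\rho_\tau\{Y-\hat{Q}(\betabf,\gammabf)\}-\rho_\tau\{Y-Q_\tau(\Xbf,A)\}\bigr]=\expect\bigl[w_\tau(\Xbf,A;\betabf,\gammabf)\,\Delta_\tau^{2}(\Xbf,A;\betabf,\gammabf)\bigr].
\end{equation*}
Applying the same identity at the reference point $(\betabf',\gammabf')$ and subtracting shows that the objective in \eqref{eq:population_beta_tau} differs from $\expect[w_\tau\Delta_\tau^{2}]$ only by an additive quantity that does not depend on $(\betabf,\gammabf)$. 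Under the uniqueness hypothesis (iii), the two criteria share the same unique minimizer over $\mathcal{B}\times\Gamma$, giving the claimed representation.

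The main technical hurdle is essentially bookkeeping: verifying that Knight's identity goes through cleanly as $\Delta$ changes sign, and justifying the interchange of conditional expectation with the $ds$-integral. Both reduce to routine applications of conditions (i)--(ii), which explains why the argument parallels the quantile-regression specification result of \cite{angrist2006quantile} almost verbatim and can be omitted from the main text.
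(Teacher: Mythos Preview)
Your proposal is correct and matches the paper's intended approach: the paper explicitly states that the proof ``follows Theorem~1 of \cite{angrist2006quantile}, and is omitted for brevity,'' and what you have written is precisely that argument---Knight's identity, the vanishing of the linear term at the true conditional quantile, and the identification of the remainder with $w_\tau\Delta_\tau^2$ via the change of variables $s=u\Delta$. Nothing further is needed.
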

\noindent \textbf{Remarks:}
\begin{enumerate}
	\item Theorem 2 shows that $\hat{Q}\left(\betabf(\tau),\gammabf(\tau)\right)$ is a weighted least square approximation to $Q_{\tau}(\Xbf,A)$. In other word, $\varphi(\Xbf;\gammabf(\tau))+\{A-\pi(\Xbf)\}C(\Xbf;\betabf(\tau))$ is close to $\varphi_0(\Xbf)+\{A-\pi(\Xbf)\}C(\Xbf;\betabf_0(\tau))$. So even though it is not true that $\betabf(\tau)=\betabf_0(\tau)$ holds exactly, the difference between them is small in general . This coupled with the fact that $\hat{\betabf}^{R}_{\rho(\tau)}\inprob\betabf(\tau)$ (proved in Theorem \ref{thm:asymptotic_normality_pinball}), leads to the conclusion that approximately ITR $\hat{g}^{R}_{\rho(\tau)}(\xbf)$ $(\triangleq\identity\{C(\xbf;\hat{\betabf}^R_{\rho(\tau)})>0\})$ maximizes the $\tau$-th conditional quantile. This observation is justified numerically in Section 4.2.
	\item When there exists $\gammabf_0\in\Gamma$ such that $\varphi_0(\Xbf)\equiv\varphi(\Xbf;\gammabf_0)$, then we have $\betabf(\tau)=\betabf_0(\tau)$.
\end{enumerate}

\subsection{Consistency of Robust Regression: Other Losses}
Under model \eqref{eq:model_nointeraction} and the assumption $\epsilon\perp A|\Xbf$, similar consistency results can be established for Huber loss and the $\epsilon$-insensitive loss, as stated
in Theorem \ref{thm:consistency_huber}.
\begin{theorem}
	\label{thm:consistency_huber}
	Under regularity conditions (A1)-(A8), if the contrast function in \eqref{eq:model_nointeraction} is correctly specified and $\pi(\xbf)$ is known, then we have
	\begin{enumerate}[(a)]
		\item $\hat{\betabf}^{R}_{H(\alpha)}\inprob\betabf_0$ for all $\alpha>0$, where $\hat{\betabf}^{R}_{H(\alpha)}$ is the solution of \eqref{eq:A-general-loss} when $M(x)=H_{\alpha}(x)$;
		\item $\hat{\betabf}^{R}_{J(\epsilon)}\inprob\betabf_0$ for all $\epsilon>0$, where $\hat{\betabf}^{R}_{J(\epsilon)}$ is the solution of \eqref{eq:A-general-loss} when $M(x)=J_{\epsilon}(x)$.
	\end{enumerate}
\end{theorem}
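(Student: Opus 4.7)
The plan is to mirror the template used to establish Theorem 1, replacing the pinball-specific subgradient calculations with the corresponding ones for $H_\alpha$ and $J_\epsilon$. Both losses are convex with minimum at $0$, and under the linearity assumptions on $\varphi(\Xbf;\gammabf)$ and $C(\Xbf;\betabf)$ embedded in (A3)--(A8), $L_{3n}(\betabf,\gammabf)$ is convex in $(\betabf,\gammabf)$. This lets me invoke the standard convexity argument for convex M-estimators: pointwise convergence of the convex empirical objective to a deterministic convex limit $L_3(\betabf,\gammabf)=\expect\{M[Y-\varphi(\Xbf;\gammabf)-\{A-\pi(\Xbf)\}C(\Xbf;\betabf)]\}$, combined with uniqueness of the $\betabf$-coordinate of the argmin of $L_3$, is enough to deliver $\hat{\betabf}^{R}_{M}\inprob\betabf_0$.

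First I would show that $(\betabf_0,\gammabf^*)$ minimizes $L_3$ for some $\gammabf^*\in\Gamma$. Let $\psi$ be a measurable selection from $\partial M$: for the Huber loss $\psi(r)=\max(-\alpha,\min(\alpha,r))$, while for the $\epsilon$-insensitive loss $\psi(r)=\sgn(r)\identity\{|r|>\epsilon\}$ (with an arbitrary value in $[-1,1]$ at the two kinks). The residual at $(\betabf_0,\gammabf)$ equals $R(\gammabf)\triangleq\varphi_0(\Xbf)-\varphi(\Xbf;\gammabf)+\epsilon$, which by the conditional independence assumption $\epsilon\perp A\mid\Xbf$ is conditionally independent of $A$ given $\Xbf$. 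Conditioning on $\Xbf$ and using $\expect(A\mid\Xbf)=\pi(\Xbf)$ then gives
\begin{equation*}
\frac{\partial L_3}{\partial\betabf}\bigg|_{(\betabf_0,\gammabf)}=-\expect\!\left[\expect\{\psi(R(\gammabf))\mid\Xbf\}\cdot\{\expect(A\mid\Xbf)-\pi(\Xbf)\}\cdot\frac{\partial C(\Xbf;\betabf_0)}{\partial\betabf}\right]=\zerobf,
\end{equation*}
so the $\betabf$-subgradient of $L_3$ vanishes at $\betabf_0$ for every $\gammabf$. Choosing $\gammabf^*\in\argmin_{\gammabf}L_3(\betabf_0,\gammabf)$, which exists by convexity and the coercivity built into (A1)--(A8), makes $(\betabf_0,\gammabf^*)$ a global minimizer. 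Note that this argument does not require $\varphi(\Xbf;\gammabf)$ to be correctly specified, mirroring the robustness-to-baseline feature highlighted in the introduction.

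Next I would show that $\betabf_0$ is the unique $\betabf$-coordinate of any minimizer of $L_3$. For the Huber loss this follows from strict convexity at $\betabf_0$: the generalized Hessian in the $\betabf$ direction is $\expect[\identity\{|R(\gammabf^*)|\le\alpha\}\{A-\pi(\Xbf)\}^2 u u\trans]$, where $u=\partial C(\Xbf;\betabf_0)/\partial\betabf$. This matrix is positive definite because the conditional error distribution puts positive mass in a neighborhood of zero (from (A1)), the positivity assumption (C3) ensures $\{A-\pi(\Xbf)\}^2>0$ with positive probability, and the design conditions (A3)--(A8) give non-degeneracy of $u$. For the $\epsilon$-insensitive loss, by contrast, $J_\epsilon''\equiv 0$ in the classical sense, so I would extract curvature from a direct second-order expansion of $L_3(\betabf_0+\deltabf,\gammabf^*)-L_3(\betabf_0,\gammabf^*)$; the leading quadratic form is governed by the conditional density of $R(\gammabf^*)$ at $\pm\epsilon$ together with the design moments of $\{A-\pi(\Xbf)\}u$. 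Once strict positivity of this form is established, uniqueness of the $\betabf$-component follows and the convexity lemma yields both (a) and (b).

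The main obstacle is the identifiability step for the $\epsilon$-insensitive loss: because $J_\epsilon$ is piecewise linear with a flat bottom, the Huber-style Hessian argument is unavailable and all $\betabf$-curvature of $L_3$ must come from the conditional density of $R(\gammabf^*)$ evaluated at the kinks $\pm\epsilon$. The substantive technical work is therefore to translate the appendix regularity conditions into a uniform positive lower bound on this density over a neighborhood of $\betabf_0$, and to combine it with the design-moment non-degeneracy supplied by (A3)--(A8) to produce a strictly positive second-order increment. Once that estimate is in hand, the remainder of the argument parallels the Huber case and the proof concludes as in Theorem 1.
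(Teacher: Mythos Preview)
Your proposal is viable but takes a genuinely different route from the paper. The paper does not use a ``subgradient equals zero plus Hessian'' argument; instead it repeats the Theorem~1 template literally, working with the centered difference $D=M(K_1-K_2)-M(K_1)$ where $K_1=\varphi_0(\Xbf)-\varphi(\Xbf;\gammabf)+\epsilon$ and $K_2=\{A-\pi(\Xbf)\}\Delta C(\Xbf;\betabf)$, and deriving pointwise lower bounds by case analysis. For $H_\alpha$ a four-case split yields $D\ge(\text{term linear in }K_2)+\tfrac12 K_2^2\,\identity\{K_1\in[-\alpha,\alpha],\,K_1-K_2\in[-\alpha,\alpha]\}$; for $J_\epsilon$ a three-case split gives $D\ge(\text{term linear in }K_2)+(\text{nonnegative remainder})$. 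The conditional independence $\epsilon\perp A\mid\Xbf$ kills the expectation of the linear part (this is the same mechanism as your subgradient identity), and then (A4) together with (A8) force the nonnegative remainder to have strictly positive expectation whenever $\betabf\neq\betabf_0$, giving $G_1(\betabf,\gammabf)>0$ directly.

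The practical payoff of the paper's route is that it sidesteps exactly the obstacle you flag for $J_\epsilon$: no second-order expansion or density-at-the-kinks bound is needed, because global strict positivity comes from an explicit nonnegative remainder rather than from local curvature. Your approach is the more standard M-estimator playbook and would port more cleanly to other smooth losses, but here it is doing extra work. Two small points to tidy up: (i) the uncentered limit $L_3(\betabf,\gammabf)=\expect\{M[\,\cdot\,]\}$ that you write down can be $+\infty$ under (A1)--(A8), since both $H_\alpha$ and $J_\epsilon$ grow linearly and the conditions allow Cauchy-type errors; you must work with the centered objective $L_{3n}(\betabf,\gammabf)-L_{3n}(\betabf_0,\gammabf')$ as in Theorem~1 before invoking the law of large numbers; (ii) the assumption that the conditional error distribution places mass near zero is (A8), not (A1).
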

\subsection{Asymptotic Normality: Pinball Loss}
Without loss of generality, in this section we assume both the $\varphi(\Xbf;\gammabf)$ and $C(\Xbf;\betabf)$ take the linear form: $\varphi(\Xbf;\gammabf)=\tilde{\Xbf}\trans\gammabf$ and $C(\Xbf;\betabf)=\tilde{\Xbf}\trans\betabf$, where $\tilde{\Xbf}=(1,\Xbf\trans)\trans$. Denote $\hat{\betabf}(\tau)=\hat{\betabf}^{R}_{\rho(\tau)}$ and $\hat{\gammabf}(\tau)=\hat{\gammabf}^{R}_{\rho(\tau)}$. Denote $\Wbf=(\{A-\pi(\Xbf)\}\tilde{\Xbf}\trans,\tilde{\Xbf}\trans)\trans$, $\thetabf(\tau)=(\betabf(\tau)\trans,\gammabf(\tau)\trans)\trans$, $\hat{\thetabf}(\tau)=(\hat{\betabf}(\tau)\trans,\hat{\gammabf}(\tau)\trans)\trans$ and $J(\tau)\triangleq\expect\left[f_Y(\Wbf\trans\thetabf(\tau)|\Xbf,A)\Wbf\Wbf\trans\right]$. Under the following regularity conditions, which is the same as the assumptions assumed in \cite{angrist2006quantile} and \cite{lee2013interpretation}, we have the asymptotic normality of $\hat{\thetabf}(\tau)$, which is given in Theorem \ref{thm:asymptotic_normality_pinball}.
\begin{itemize}
	\item[(B1)] $\{(Y_i,\Xbf_i,A_i,\epsilon_i),i=1,\ldots,n\}$ are i.i.d random variables;
	\item[(B2)] the conditional density $f_Y(y|\Xbf=\xbf,A=a))$ exists, and is bounded and uniformly continuous in y, uniformly in $\xbf$ over the support of $\Xbf$;
	\item[(B3)] $J(\tau)$ is positive definite for all $\tau\in(0,1)$, where $\thetabf(\tau)$ is uniquely defined in \eqref{eq:population_beta_tau};
	\item[(B4)] $\expect\|\Xbf\|^{2+\epsilon}$ for some $\epsilon>0$.
\end{itemize}
\begin{theorem}
	\label{thm:asymptotic_normality_pinball}
	If regularity conditions (B1)-(B4) are hold, we have
	\begin{enumerate}
		\item (\textbf{Uniform Consistency}) $\sup_{\tau}\|\hat{\thetabf}(\tau)-\thetabf(\tau)\|=o_p(1)$;
		\item (\textbf{Asymptotic Normality}) $J(\cdot)\sqrt{n}(\hat{\thetabf}(\cdot)-\thetabf(\cdot))$ converge in distribution to a zero mean Gaussian process with covariance function $\Sigma(\tau,\tau')$ defined as
		\begin{equation}
		\Sigma(\tau,\tau')=\expect\left[\left(\tau-\identity\{Y<\Wbf\trans\thetabf(\tau)\}\right)
		\left(\tau'-\identity\{Y<\Wbf\trans\thetabf(\tau)\}\right)\Wbf\Wbf\trans\right].
		\end{equation}
	\end{enumerate}
\end{theorem}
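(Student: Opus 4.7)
The plan is to follow the standard empirical-process strategy for quantile M-estimators, extended to a functional convergence result in $\tau$. Because the objective \eqref{eq:A-general-loss} with $M=\rho_\tau$ is convex in $\thetabf=(\betabf\trans,\gammabf\trans)\trans$, I can exploit convexity throughout to bypass the non-smoothness of the pinball loss.

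\textbf{Step 1 (Uniform consistency).} Let $Q_n(\thetabf,\tau)=n^{-1}\sum_{i=1}^n \rho_\tau(Y_i-\Wbf_i\trans\thetabf)$ and $Q(\thetabf,\tau)=\expect[\rho_\tau(Y-\Wbf\trans\thetabf)]$. For each fixed $\tau$, $Q_n(\cdot,\tau)$ is convex and, by (B1) and (B4), converges pointwise in probability to $Q(\cdot,\tau)$. The convexity lemma of Pollard and Hjort--Pollard then upgrades pointwise to uniform convergence on compacta of the parameter space; coupled with (B3), which guarantees that $\thetabf(\tau)$ uniquely minimizes $Q(\cdot,\tau)$, this gives $\hat{\thetabf}(\tau)\inprob\thetabf(\tau)$ for each $\tau$. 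To strengthen this to uniformity in $\tau\in[\tau_0,1-\tau_0]$, I would verify that the class $\{\rho_\tau(Y-\Wbf\trans\thetabf):(\thetabf,\tau)\in\Theta\times[\tau_0,1-\tau_0]\}$ is Glivenko--Cantelli. Since $\rho_\tau(u)$ is Lipschitz in both $u$ and $\tau$, bracketing numbers grow polynomially and the envelope $|Y|+\|\Wbf\|\cdot\|\thetabf\|$ is integrable under (B4); an argmax continuous-mapping argument applied uniformly in $\tau$ then delivers part 1.

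\textbf{Step 2 (Bahadur representation).} Using Knight's identity
\begin{equation*}
\rho_\tau(u-v)-\rho_\tau(u)=-v\psi_\tau(u)+\int_0^v\left[\identity\{u\le s\}-\identity\{u\le 0\}\right]ds,
\end{equation*}
with $\psi_\tau(u)=\tau-\identity\{u<0\}$ and $v=\tbf\trans\Wbf_i/\sqrt{n}$, expand
\begin{equation*}
n\bigl[Q_n(\thetabf(\tau)+\tbf/\sqrt{n},\tau)-Q_n(\thetabf(\tau),\tau)\bigr]=-\tbf\trans Z_n(\tau)+R_n(\tbf,\tau),
\end{equation*}
where $Z_n(\tau)=n^{-1/2}\sum_i\Wbf_i\psi_\tau(Y_i-\Wbf_i\trans\thetabf(\tau))$ is the score process and $R_n$ collects the nonlinear remainder. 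Condition (B2) lets one compute $\expect[R_n(\tbf,\tau)]\to\tfrac{1}{2}\tbf\trans J(\tau)\tbf$, while a maximal inequality for indicator-type classes drives $R_n-\expect[R_n]\inprob 0$ uniformly over $\tbf$ in compacta and $\tau\in[\tau_0,1-\tau_0]$. Reapplying the convexity argument of Hjort--Pollard converts the local quadratic approximation into the uniform Bahadur representation
\begin{equation*}
\sqrt{n}\bigl(\hat{\thetabf}(\tau)-\thetabf(\tau)\bigr)=J(\tau)^{-1}Z_n(\tau)+o_p(1).
\end{equation*}

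\textbf{Step 3 (Weak convergence of the score).} I would show that $Z_n(\cdot)$ converges weakly to a tight zero-mean Gaussian process on $[\tau_0,1-\tau_0]$. Finite-dimensional distributions follow from the Cram\'er--Wold device and the multivariate CLT, once one notes that the first-order condition for $\thetabf(\tau)$ gives $\expect[\Wbf\psi_\tau(Y-\Wbf\trans\thetabf(\tau))]=\zerobf$; a direct second-moment calculation then yields the covariance $\Sigma(\tau,\tau')$ in the form displayed. Tightness follows because $\{\identity\{Y<\Wbf\trans\thetabf(\tau)\}:\tau\in[\tau_0,1-\tau_0]\}$ is a VC class with envelope of order $\|\Wbf\|$, which has a $(2+\epsilon)$-th moment by (B4), placing $Z_n$ in a Donsker class. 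Combining Steps 2 and 3 via the continuous-mapping theorem applied to $z\mapsto J(\tau)^{-1}z$ completes the proof.

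The main technical obstacle is handling the discontinuity of $\psi_\tau$ uniformly in $\tau$ when establishing stochastic equicontinuity of $R_n(\tbf,\tau)$ and of the score $Z_n(\tau)$. This is where the moment condition (B4) and the bounded, uniformly continuous conditional density (B2) become essential: together they control the increments $|R_n(\tbf,\tau)-R_n(\tbf,\tau')|$ and $|Z_n(\tau)-Z_n(\tau')|$ via chaining over a VC-subgraph class, which is precisely what one needs to pass from pointwise consistency to the full functional-convergence statement.
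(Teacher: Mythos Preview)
Your proof sketch is correct and follows the standard empirical-process/convexity route for quantile regression processes. The paper does not give its own proof of this theorem but defers entirely to \cite{angrist2006quantile}, whose argument proceeds along essentially the same lines you outline (convexity for uniform consistency, a Bahadur-type expansion built on Knight's identity, and a Donsker-class argument for the score process), so your approach matches the cited one.
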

The proof is given in \cite{angrist2006quantile}, and the asymptotic covariance matrix of $\hat{\thetabf}(\tau)$ can be estimated by either a bootstrap procedure \citep{hahn1997bayesian} or a nonparametric kernel method \citep{angrist2006quantile}. We adopt the parametric bootstrap approach to estimate the asymptotic covariance matrix in Section 5. Under model \eqref{eq:model_nointeraction} the result of Theorem \ref{thm:asymptotic_normality_pinball} can be further simplified, which is given in Theorem 5.
\begin{theorem}
	Under the condition of Theorem 4, if further we assume $Y=\varphi_0(\Xbf)+\{A-\pi(\Xbf)\}\tilde{\Xbf}\trans\betabf_0+\epsilon$, and $\epsilon\perp A|\Xbf$, then
	\begin{enumerate}
		\item $\sup_{\tau}\|\hat{\betabf}(\tau)-\betabf_0\|=o_p(1)$;
		\item $\sqrt{n}(\hat{\betabf}(\tau)-\betabf_0)\indist N(\zerobf, J_{11}^{-1}(\tau)\Sigma_{11}(\tau,\tau)J_{11}^{-1}(\tau))$, where
		\begin{align*}
		J_{11}(\tau)=& \expect\left[f_{\epsilon}\left(\tilde{\Xbf}\trans\gammabf(\tau)-\varphi_0(\Xbf)|\Xbf\right)
		\pi(\Xbf)\{1-\pi(\Xbf)\}\tilde{\Xbf}\tilde{\Xbf}\trans\right],\\
		\Sigma_{11}(\tau,\tau)=& \expect\left\{\left[\tau-\identity\{\epsilon<\tilde{\Xbf}\trans\gammabf(\tau)-\varphi_0(\Xbf)\}\right]^2
		\pi(\Xbf)\{1-\pi(\Xbf)\}\tilde{\Xbf}\tilde{\Xbf}\trans\right\}.
		\end{align*}
		Furthermore, we have $\Sigma_{11}(\tau,\tau)\leq\left(\tau^2+|1-2\tau|\right)\expect\left[\pi(\Xbf)\{1-\pi(\Xbf)\}\tilde{\Xbf}\tilde{\Xbf}\trans\right]$.
	\end{enumerate}
\end{theorem}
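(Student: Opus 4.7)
The plan is to deduce the theorem from Theorem 4 by exploiting two special features of the additive model with $\epsilon\perp A\mid\Xbf$: the population parameter $\betabf(\tau)$ collapses to the truth $\betabf_0$ for every $\tau$, and after evaluation at this truth both $J(\tau)$ and $\Sigma(\tau,\tau)$ become block-diagonal, so the asymptotic variance for $\hat\betabf(\tau)$ separates cleanly from the nuisance $\hat\gammabf(\tau)$ contribution.

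For part 1, Theorem 1 applied at each fixed $\tau\in(0,1)$ gives $\hat{\betabf}^{R}_{\rho(\tau)}\inprob\betabf_0$ under the present hypotheses. Combining this pointwise limit with the uniform consistency statement of Theorem 4 and the uniqueness assumed in (B3) forces $\betabf(\tau)\equiv\betabf_0$ throughout $(0,1)$. Reading off the first block of $\sup_\tau\|\hat{\thetabf}(\tau)-\thetabf(\tau)\|=o_p(1)$ then yields part 1.

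For part 2, I would partition $\Wbf=(\{A-\pi(\Xbf)\}\tilde{\Xbf}\trans,\tilde{\Xbf}\trans)\trans$ into its two blocks and show the corresponding $2\times 2$ block decompositions of $J(\tau)$ and $\Sigma(\tau,\tau)$ are diagonal. At $\betabf(\tau)=\betabf_0$ the substitution $Y-\Wbf\trans\thetabf(\tau)=\varphi_0(\Xbf)-\tilde{\Xbf}\trans\gammabf(\tau)+\epsilon$ makes both $f_Y(\Wbf\trans\thetabf(\tau)\mid\Xbf,A)=f_\epsilon(\tilde{\Xbf}\trans\gammabf(\tau)-\varphi_0(\Xbf)\mid\Xbf)$ and $\identity\{Y<\Wbf\trans\thetabf(\tau)\}=\identity\{\epsilon<\tilde{\Xbf}\trans\gammabf(\tau)-\varphi_0(\Xbf)\}$ measurable with respect to $(\epsilon,\Xbf)$ alone, hence, by $\epsilon\perp A\mid\Xbf$, conditionally independent of $A$ given $\Xbf$. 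Conditioning on $\Xbf$ and using $\expect[A-\pi(\Xbf)\mid\Xbf]=0$ then annihilates the off-diagonal blocks $J_{12}(\tau)$ and $\Sigma_{12}(\tau,\tau)$, while $\expect[\{A-\pi(\Xbf)\}^2\mid\Xbf]=\pi(\Xbf)\{1-\pi(\Xbf)\}$ produces the stated forms of $J_{11}(\tau)$ and $\Sigma_{11}(\tau,\tau)$ by another tower-rule application. Block-diagonality of $J(\tau)$ transfers to $J(\tau)^{-1}$, so the upper-left block of $J(\tau)^{-1}\Sigma(\tau,\tau)J(\tau)^{-1}$ is exactly $J_{11}(\tau)^{-1}\Sigma_{11}(\tau,\tau)J_{11}(\tau)^{-1}$, which is the marginal asymptotic variance of $\sqrt{n}(\hat{\betabf}(\tau)-\betabf_0)$.

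The final inequality on $\Sigma_{11}(\tau,\tau)$ is a short scalar argument: the random factor $(\tau-\identity\{\epsilon<\cdot\})^2$ takes only the values $\tau^2$ and $(1-\tau)^2=\tau^2+(1-2\tau)$, each of which is bounded by $\tau^2+|1-2\tau|$; pulling this deterministic bound out of the expectation yields the claimed positive-semidefinite inequality. The main obstacle in the overall argument is the block-diagonalization step, i.e., tracking that both the Hessian-like density factor and the indicator factor become functions of $(\epsilon,\Xbf)$ at the true parameter, so that $A-\pi(\Xbf)$ can be integrated out first; once that is in place, the rest is routine matrix bookkeeping.
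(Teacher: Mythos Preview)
Your proposal is correct and follows essentially the same route as the paper: invoke Theorem~1 to get $\betabf(\tau)=\betabf_0$, substitute into Theorem~4, and use $\epsilon\perp A\mid\Xbf$ together with the tower rule to obtain the block-diagonal forms of $J(\tau)$ and $\Sigma(\tau,\tau)$. Your argument for the final inequality is a slight simplification of the paper's (which conditions on $\Xbf$ and computes $(p-\tau)^2+p(1-p)$ before bounding), but the pointwise bound $(\tau-\identity\{\cdot\})^2\in\{\tau^2,(1-\tau)^2\}\le\tau^2+|1-2\tau|$ is equally valid and arguably cleaner.
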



Comparing the asymptotic normality of $\hat{\betabf}(\tau)$ with $\hat{\betabf}^A_{LS}$ yields interesting insights. Assuming that $\expect(Y|\Xbf,A)=\varphi_0(\Xbf)+\{A-\pi(\Xbf)\}\tilde{\Xbf}\trans\betabf_0$ and $(\betabf_0,\gammabf^*)=\argmin_{(\betabf,\gammabf)}\expect[Y-\varphi(\Xbf;\gammabf)-\{A-\pi(\Xbf)\}\tilde{\Xbf}\trans\betabf]^2$, the asymptotic normality property of $\hat{\betabf}^{A}_{LS}$ can then be established, which is summarized in Theorem \ref{thm:lsA}. Its proof has been omitted, and readers are referred to \cite{lu2011variable}.
\begin{theorem}
	\label{thm:lsA}
	Under the regularity condition of A1-A4 of \cite{lu2011variable},
	\begin{equation}
	\sqrt{n}(\hat{\betabf}^{A}_{LS}-\betabf_0)\indist N(0,U_{11}^{-1}\Omega_{11}U_{11}^{-1}),
	\end{equation}
	where $U_{11}=\expect\left[\pi(\Xbf)\{1-\pi(\Xbf)\}\tilde{\Xbf}\tilde{\Xbf}\trans\right]$ and 
	$$\Omega_{11}=\expect\left[\left\{\varphi_0(\Xbf)-\varphi(\Xbf;\gammabf^*)+\epsilon\right\}^2
	\pi(\Xbf)\{1-\pi(\Xbf)\}\tilde{\Xbf}\tilde{\Xbf}\trans\right]$$
\end{theorem}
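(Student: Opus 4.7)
The plan is to carry out the standard M-estimator Taylor expansion of the joint estimating equations for $(\betabf,\gammabf)$, exploiting the orthogonality identity $\expect[A-\pi(\Xbf)\mid\Xbf]=0$ to decouple the asymptotics of $\hat{\betabf}^A_{LS}$ from the nuisance estimator $\hat{\gammabf}^A_{LS}$. I would first show that $L_{2n}(\betabf,\gammabf)$ converges uniformly on compacta to the population objective, and that $(\betabf_0,\gammabf^\ast)$ is the unique minimizer of the latter. Combined with the convexity of $L_{2n}$, the argmin continuous mapping theorem then gives the preliminary consistency $(\hat{\betabf}^A_{LS},\hat{\gammabf}^A_{LS})\inprob(\betabf_0,\gammabf^\ast)$.

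Next, differentiate $L_{2n}$ in $\betabf$ to obtain the empirical score
\begin{equation*}
S_{\betabf,n}(\betabf,\gammabf)=-\frac{2}{n}\sum_{i=1}^{n}\{A_i-\pi(\Xbf_i)\}\tilde{\Xbf}_i\bigl(Y_i-\tilde{\Xbf}_i\trans\gammabf-\{A_i-\pi(\Xbf_i)\}\tilde{\Xbf}_i\trans\betabf\bigr),
\end{equation*}
which vanishes at $(\hat{\betabf}^A_{LS},\hat{\gammabf}^A_{LS})$. Under (B1)--(B4)-type regularity (moments on $\tilde{\Xbf}$ and $\epsilon$), WLLN yields the Hessian blocks $H_{\betabf\betabf,n}\inprob 2U_{11}$ via $\expect[\{A-\pi(\Xbf)\}^2\mid\Xbf]=\pi(\Xbf)\{1-\pi(\Xbf)\}$, and crucially $H_{\betabf\gammabf,n}\inprob 2\expect[\{A-\pi(\Xbf)\}\tilde{\Xbf}\tilde{\Xbf}\trans]=\mathbf{0}$ by the conditional mean-zero property of $A-\pi(\Xbf)$. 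A Taylor expansion around $(\betabf_0,\gammabf^\ast)$ then gives
\begin{equation*}
0=S_{\betabf,n}(\betabf_0,\gammabf^\ast)+H_{\betabf\betabf,n}(\hat{\betabf}^A_{LS}-\betabf_0)+H_{\betabf\gammabf,n}(\hat{\gammabf}^A_{LS}-\gammabf^\ast)+o_p(n^{-1/2}).
\end{equation*}
Because $H_{\betabf\gammabf,n}=o_p(1)$ while $\sqrt{n}(\hat{\gammabf}^A_{LS}-\gammabf^\ast)=O_p(1)$ (standard M-estimator argument for the $\gammabf$-block), the cross term is $o_p(1)$ after multiplying by $\sqrt{n}$. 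This is the critical ``Neyman-style'' orthogonality step that removes the first-order effect of the nuisance estimator.

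It remains to apply the multivariate CLT to $\sqrt{n}S_{\betabf,n}(\betabf_0,\gammabf^\ast)$. Using the true-model decomposition $Y_i-\tilde{\Xbf}_i\trans\gammabf^\ast-\{A_i-\pi(\Xbf_i)\}\tilde{\Xbf}_i\trans\betabf_0=\varphi_0(\Xbf_i)-\varphi(\Xbf_i;\gammabf^\ast)+\epsilon_i$, the summands are i.i.d.\ with mean zero (again by conditioning on $\Xbf_i$ and using $\expect[A_i-\pi(\Xbf_i)\mid\Xbf_i]=0$) and covariance $4\Omega_{11}$. Assembling the sandwich gives $\sqrt{n}(\hat{\betabf}^A_{LS}-\betabf_0)\indist N(\zerobf,(2U_{11})^{-1}(4\Omega_{11})(2U_{11})^{-1})=N(\zerobf,U_{11}^{-1}\Omega_{11}U_{11}^{-1})$.

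The main technical obstacle is justifying the $o_p(n^{-1/2})$ remainder in the Taylor expansion: this requires stochastic equicontinuity of the empirical Hessian process over a shrinking neighborhood of $(\betabf_0,\gammabf^\ast)$, which typically follows from a standard Donsker-class argument under the moment conditions in (B4). Everything else is algebra driven by the two facts $\expect[A-\pi(\Xbf)\mid\Xbf]=0$ and $\expect[(A-\pi(\Xbf))^2\mid\Xbf]=\pi(\Xbf)\{1-\pi(\Xbf)\}$; the former produces the vanishing cross-Hessian (so misspecification of $\varphi$ does not enlarge the asymptotic variance through extra $\gammabf$-uncertainty), while the latter produces the $U_{11}$ factor in both the bread and, jointly with the residual, the meat $\Omega_{11}$.
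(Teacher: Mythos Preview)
The paper does not actually prove this theorem; it states ``Its proof has been omitted, and readers are referred to \cite{lu2011variable}.'' So there is no in-paper argument to compare against. Your M-estimator sandwich derivation is the standard route and is essentially correct: the block-diagonalization of the Hessian via $\expect[A-\pi(\Xbf)\mid\Xbf]=0$ is exactly the mechanism that makes $\hat{\betabf}^A_{LS}$ insensitive (to first order) to the nuisance $\hat{\gammabf}$, and the meat computation you give yields $\Omega_{11}$.

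Two small comments. First, you specialize to $\varphi(\Xbf;\gammabf)=\tilde{\Xbf}\trans\gammabf$ in the score; the theorem as stated allows a general parametric $\varphi(\Xbf;\gammabf)$, but the same calculation goes through with $\nabla_{\gammabf}\varphi(\Xbf;\gammabf)$ in place of $\tilde{\Xbf}$ in the cross block. Second, you flag the Taylor-expansion remainder as the ``main technical obstacle,'' but here the loss is quadratic and (with linear $\varphi$) the $\betabf$-score is exactly linear in $(\betabf,\gammabf)$, so the expansion is exact and no Donsker or stochastic-equicontinuity argument is needed; for smooth nonlinear $\varphi$ a routine second-order bound using consistency suffices. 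Finally, to get the stated form of $\Omega_{11}$ from $\expect[\{A-\pi(\Xbf)\}^{2}\{\varphi_0(\Xbf)-\varphi(\Xbf;\gammabf^\ast)+\epsilon\}^{2}\tilde{\Xbf}\tilde{\Xbf}\trans]$ you implicitly use that $\expect[\epsilon^{2}\mid\Xbf,A]$ does not depend on $A$ (which holds under the paper's working assumption $\epsilon\perp A\mid\Xbf$); it is worth making that step explicit.
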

\noindent \textbf{Remarks:}
\begin{enumerate}
	\item When the family of functions $\{\varphi(\Xbf;\gammabf),\gammabf\in\Gamma\}$ cannot well approximate the unknown baseline function $\varphi_0(\Xbf)$, the $\Omega_{11}$ term in the asymptotic variance of $\hat{\betabf}^{A}_{LS}$ may explode, which makes $\hat{\betabf}^{A}_{LS}$ less efficient than $\hat{\betabf}(\tau)$.
	\item When $Y=\tilde{\Xbf}\trans\gamma_0+\{A-\pi(\Xbf)\}\tilde{\Xbf}\trans\betabf_0+\epsilon$, $\epsilon\perp(A,\Xbf)$, $\pi(\Xbf)\equiv0.5$ and $\epsilon\sim N(0,\sigma^2)$, the asymptotic variance of $\hat{\betabf}(\tau=0.5)$ is $2\pi\sigma^2\expect(\tilde{\Xbf}\tilde{\Xbf}\trans)^{-1}$, which is strictly larger than $4\sigma^2\expect(\tilde{\Xbf}\tilde{\Xbf}\trans)^{-1}$ (the asymptotic variance of $\hat{\betabf}^{A}_{LS}$).
\end{enumerate}

\section{Numerical Results: Simulation Studies}
\label{section:simulation}
To demonstrate finite sample performance of the proposed robust regression methods for optimal treatment rule estimation, we conduct two simulation studies: the errors independent with treatments, and the errors interactive with treatments, respectively.
\subsection{Simulation Study I:  error terms independent with treatment}
We consider the following two models with p=3,
\begin{itemize}
	\item Model I:
	\begin{equation*}
	Y_i=1+(X_{i1}-X_{i2})(X_{i1}+X_{i3})+\{A_i-\pi(\Xbf_{i})\}\betabf_0\trans\tilde{\Xbf}_i+\sigma(\Xbf_{i})\epsilon_i,
	\end{equation*}
	where $\Xbf_{i}=(X_{i1},X_{i2},X_{i3})\trans$ are multivariate normal with mean 0, variance 1, and $\mathrm{Corr}(X_{ij},X_{ik})=0.5^{|j-k|}$, $\tilde{\Xbf}_i=(1,\Xbf_i\trans)\trans$ and $\betabf_0=(0,1,-1,1)\trans$.
	\item Model II:
	\begin{equation*}
	Y_i=\gammabf_0\trans\tilde{\Xbf}_i+\{A_i-\pi(\Xbf_{i})\}\betabf_0\trans\tilde{\Xbf}_i+\sigma(\Xbf_{i})\epsilon_i,
	\end{equation*}
	where $\gammabf_0\trans=(0.5,4,1,-3)$, and $\Xbf_{i}$, $\tilde{\Xbf}_{i}$ and $\betabf_0$ are the same as Model I.
\end{itemize}

We take linear forms for both the baseline and the contrast functions, where $\varphi(\Xbf;\gammabf)=\gammabf\trans\tilde{\Xbf}$ and $C(\Xbf;\betabf)=\betabf\trans\tilde{\Xbf}$. We assume the propensity scores $\pi(\cdot)$ are known, and we study both the constant case $(\pi(\Xbf_i)=0.5)$ and the non-constant case $(\pi(\Xbf_i)=\mathrm{logit}(\Xbf_{i1}-\Xbf_{i2}))$. In addition, We consider two different $\sigma(\Xbf_{i})$ functions, i.e., the homogeneous case with $\sigma(\Xbf_{i})=1$, and the heterogenous case with $\sigma(\Xbf_{i})=0.5+(X_{i1}-X_{i2})^2$. The simulation results under constant and non-constant propensity scores are similar. Thus, for brevity, we only report the constant case and allocate the result of non-constant case to the Appendix B. The results of Model I and II with constant propensity score are given in Table \ref{table:modelI_constant} and \ref{table:modelII_constant} respectively.

\begin{table}[ht]
	\setlength{\tabcolsep}{4pt}
	\caption{Summary result of Model I with constant propensity scores. LS stands for lsA-learning. P(0.5) stands for robust regression with pinball loss and parameter $\tau=0.5$. P(0.25) stands for robust regression with pinball loss and parameter $\tau=0.25$. Huber stands for robust regression with Huber loss, where parameter $\alpha$ is tuned automatically with R function rlm. Column $\delta_{0.5}$ is multiplied by 10.}
	\label{table:modelI_constant}
	\begin{center}
		\begin{tabular}{@{}ll ccc ccc ccc}
			\toprule
			\multicolumn{11}{c}{{\textbf{Homogeneous Error}}}\\
			\hline
			& & \multicolumn{3}{c}{{{\bfseries Normal}}}
			& \multicolumn{3}{c}{{{\bfseries Log-Normal}}}
			& \multicolumn{3}{c}{{{\bfseries Cauchy}}}\\
			\cmidrule(r){3-5}
			\cmidrule(lr){6-8}
			\cmidrule(l){9-11}
			n & method & mse  & PCD & $\delta_{0.5}$ & mse  & PCD  & $\delta_{0.5}$ & mse  & PCD & $\delta_{0.5}$\\
			\cmidrule(r){3-5}
			\cmidrule(lr){6-8}
			\cmidrule(l){9-11}
			100 & LS & 1.32 (0.040) & 80.7 & 1.06 & 2.36 (0.081) & 75.7 & 1.57 &  & 58.4 & 3.75 \\
			& P(0.5) & 1.44 (0.042) & 80.1 & 1.13 & 1.73 (0.051) & 78.0 & 1.31 & 2.69 (0.077) & 75.2 & 1.63 \\
			& P(0.25) & 1.90 (0.057) & 78.3 & 1.34 & 1.63 (0.051) & 79.0 & 1.29 & 5.29 (0.168) & 70.4 & 2.25 \\
			& Huber & 1.15 (0.034) & 81.9 & 0.93 & 1.45 (0.044) & 79.9 & 1.13 & 2.61 (0.072) & 74.9 & 1.66 \\
			200 & LS & 0.68 (0.021) & 85.6 & 0.59 & 1.10 (0.033) & 82.0 & 0.91 &  & 58.7 & 3.70 \\
			& P(0.5) & 0.73 (0.021) & 85.3 & 0.62 & 0.78 (0.021) & 84.1 & 0.70 & 1.23 (0.037) & 81.3 & 0.99 \\
			& P(0.25) & 0.92 (0.028) & 84.0 & 0.75 & 0.70 (0.023) & 86.0 & 0.59 & 2.48 (0.079) & 75.7 & 1.64 \\
			& Huber & 0.58 (0.017) & 86.8 & 0.50 & 0.66 (0.018) & 85.5 & 0.58 & 1.24 (0.035) & 80.8 & 1.03 \\
			400 & LS & 0.33 (0.009) & 90.3 & 0.26 & 0.56 (0.016) & 87.1 & 0.46 &  & 59.2 & 3.61 \\
			& P(0.5) & 0.35 (0.010) & 90.0 & 0.29 & 0.37 (0.010) & 89.0 & 0.34 & 0.56 (0.016) & 87.1 & 0.48 \\
			& P(0.25) & 0.43 (0.013) & 89.1 & 0.34 & 0.33 (0.010) & 90.7 & 0.25 & 1.16 (0.037) & 82.9 & 0.86 \\
			& Huber & 0.28 (0.008) & 91.1 & 0.22 & 0.31 (0.009) & 90.2 & 0.27 & 0.58 (0.017) & 86.7 & 0.49 \\
			800 & LS & 0.17 (0.005) & 93.2 & 0.13 & 0.26 (0.008) & 90.9 & 0.23 &  & 59.4 & 3.59 \\
			& P(0.5) & 0.17 (0.005) & 93.1 & 0.13 & 0.19 (0.005) & 92.1 & 0.17 & 0.29 (0.009) & 90.7 & 0.24 \\
			& P(0.25) & 0.22 (0.007) & 92.4 & 0.16 & 0.18 (0.006) & 93.6 & 0.12 & 0.59 (0.019) & 87.3 & 0.48 \\
			& Huber & 0.14 (0.004) & 93.8 & 0.11 & 0.16 (0.005) & 93.1 & 0.14 & 0.29 (0.008) & 90.5 & 0.25 \\
			\hline
			\multicolumn{11}{c}{{\textbf{Heterogenous Error}}}\\
			\hline
			& & \multicolumn{3}{c}{{{\bfseries Normal}}}
			& \multicolumn{3}{c}{{{\bfseries Log-Normal}}}
			& \multicolumn{3}{c}{{{\bfseries Cauchy}}}\\
			\cmidrule(r){3-5}
			\cmidrule(lr){6-8}
			\cmidrule(l){9-11}
			n & method & mse  & PCD & $\delta_{0.5}$ & mse  & PCD  & $\delta_{0.5}$ & mse  & PCD & $\delta_{0.5}$\\
			\cmidrule(r){3-5}
			\cmidrule(lr){6-8}
			\cmidrule(l){9-11}
			100 & LS & 3.24 (0.110) & 74.7 & 1.70 & 8.98 (0.561) & 68.6 & 2.44 &  & 56.2 & 4.05 \\
			& P(0.5) & 1.70 (0.060) & 80.5 & 1.08 & 1.80 (0.064) & 80.1 & 1.08 & 3.45 (0.124) & 75.1 & 1.69 \\
			& P(0.25) & 2.50 (0.085) & 77.4 & 1.42 & 2.51 (0.079) & 76.8 & 1.46 & 9.13 (0.341) & 67.2 & 2.66 \\
			& Huber & 1.70 (0.057) & 80.4 & 1.10 & 1.87 (0.063) & 79.2 & 1.16 & 4.27 (0.155) & 72.8 & 1.93 \\
			200 & LS & 1.54 (0.050) & 80.6 & 1.06 & 4.71 (0.244) & 73.4 & 1.85 &  & 55.2 & 4.17 \\
			& P(0.5) & 0.78 (0.028) & 86.7 & 0.53 & 0.90 (0.032) & 85.3 & 0.63 & 1.49 (0.052) & 81.9 & 0.95 \\
			& P(0.25) & 1.16 (0.039) & 83.5 & 0.81 & 1.23 (0.039) & 82.0 & 0.91 & 3.95 (0.150) & 73.2 & 1.90 \\
			& Huber & 0.77 (0.025) & 86.4 & 0.55 & 0.94 (0.032) & 84.5 & 0.69 & 1.94 (0.071) & 79.3 & 1.19 \\
			400 & LS & 0.80 (0.026) & 86.0 & 0.58 & 2.69 (0.136) & 77.8 & 1.34 &  & 54.7 & 4.26 \\
			& P(0.5) & 0.39 (0.013) & 90.5 & 0.27 & 0.44 (0.017) & 89.6 & 0.32 & 0.71 (0.024) & 86.9 & 0.50 \\
			& P(0.25) & 0.56 (0.019) & 88.8 & 0.37 & 0.66 (0.020) & 86.9 & 0.50 & 1.70 (0.055) & 79.6 & 1.17 \\
			& Huber & 0.38 (0.012) & 90.4 & 0.27 & 0.48 (0.017) & 88.8 & 0.36 & 0.91 (0.029) & 84.9 & 0.65 \\
			800 & LS & 0.41 (0.013) & 89.9 & 0.29 & 1.35 (0.150) & 83.1 & 0.82 &  & 56.5 & 4.00 \\
			& P(0.5) & 0.18 (0.006) & 93.6 & 0.12 & 0.20 (0.007) & 92.6 & 0.16 & 0.36 (0.013) & 91.0 & 0.25 \\
			& P(0.25) & 0.28 (0.009) & 92.2 & 0.18 & 0.31 (0.010) & 90.8 & 0.24 & 0.89 (0.031) & 85.8 & 0.60 \\
			& Huber & 0.19 (0.006) & 93.3 & 0.13 & 0.22 (0.007) & 92.1 & 0.18 & 0.47 (0.017) & 89.2 & 0.34 \\
			\bottomrule
		\end{tabular}
	\end{center}
\end{table}

\begin{table}[ht]
	\setlength{\tabcolsep}{4pt}
	\caption{Summary result of Model II with constant propensity scores. LS stands for lsA-learning. P(0.5) stands for robust regression with pinball loss and parameter $\tau=0.5$. P(0.25) stands for robust regression with pinball loss and parameter $\tau=0.25$. Huber stands for robust regression with Huber loss, where parameter $\alpha$ is tuned automatically with R function rlm. Column $\delta_{0.5}$ is multiplied by 10.}
	\label{table:modelII_constant}
	\begin{center}
		\begin{tabular}{@{}ll ccc ccc ccc}
			\toprule
			\multicolumn{11}{c}{{\textbf{Homogeneous Error}}}\\
			\hline
			& & \multicolumn{3}{c}{{{\bfseries Normal}}}
			& \multicolumn{3}{c}{{{\bfseries Log-Normal}}}
			& \multicolumn{3}{c}{{{\bfseries Cauchy}}}\\
			\cmidrule(r){3-5}
			\cmidrule(lr){6-8}
			\cmidrule(l){9-11}
			n & method & mse  & PCD & $\delta_{0.5}$ & mse  & PCD  & $\delta_{0.5}$ & mse  & PCD & $\delta_{0.5}$\\
			\cmidrule(r){3-5}
			\cmidrule(lr){6-8}
			\cmidrule(l){9-11}
			100 & LS & 0.24 (0.006) & 91.1 & 0.21 & 1.23 (0.061) & 82.4 & 0.87 &  & 58.6 & 3.73 \\
			& P(0.5) & 0.36 (0.010) & 89.0 & 0.32 & 0.39 (0.012) & 88.8 & 0.34 & 0.80 (0.024) & 84.2 & 0.69 \\
			& P(0.25) & 0.45 (0.012) & 87.8 & 0.40 & 0.13 (0.004) & 93.4 & 0.12 & 2.37 (0.083) & 76.0 & 1.49 \\
			& Huber & 0.25 (0.007) & 90.8 & 0.22 & 0.31 (0.010) & 90.3 & 0.26 & 0.99 (0.029) & 82.4 & 0.84 \\
			200 & LS & 0.11 (0.003) & 93.7 & 0.10 & 0.52 (0.018) & 87.3 & 0.45 &  & 58.7 & 3.69 \\
			& P(0.5) & 0.17 (0.005) & 92.4 & 0.16 & 0.17 (0.005) & 92.4 & 0.15 & 0.32 (0.009) & 89.5 & 0.30 \\
			& P(0.25) & 0.20 (0.005) & 91.8 & 0.18 & 0.06 (0.002) & 95.6 & 0.05 & 1.03 (0.033) & 82.1 & 0.88 \\
			& Huber & 0.12 (0.003) & 93.6 & 0.11 & 0.13 (0.003) & 93.5 & 0.12 & 0.43 (0.013) & 87.9 & 0.40 \\
			400 & LS & 0.05 (0.001) & 95.7 & 0.05 & 0.26 (0.008) & 90.7 & 0.23 &  & 59.4 & 3.60 \\
			& P(0.5) & 0.09 (0.002) & 94.5 & 0.08 & 0.09 (0.002) & 94.5 & 0.08 & 0.15 (0.004) & 92.8 & 0.14 \\
			& P(0.25) & 0.10 (0.002) & 94.2 & 0.09 & 0.03 (0.001) & 96.9 & 0.02 & 0.44 (0.012) & 87.9 & 0.39 \\
			& Huber & 0.06 (0.001) & 95.5 & 0.05 & 0.06 (0.002) & 95.4 & 0.06 & 0.21 (0.006) & 91.6 & 0.19 \\
			800 & LS & 0.03 (0.001) & 96.9 & 0.03 & 0.13 (0.004) & 93.5 & 0.11 &  & 59.4 & 3.58 \\
			& P(0.5) & 0.04 (0.001) & 96.1 & 0.04 & 0.04 (0.001) & 96.2 & 0.04 & 0.07 (0.002) & 95.1 & 0.06 \\
			& P(0.25) & 0.05 (0.001) & 95.8 & 0.05 & 0.01 (0.000) & 97.9 & 0.01 & 0.20 (0.005) & 91.5 & 0.19 \\
			& Huber & 0.03 (0.001) & 96.8 & 0.03 & 0.03 (0.001) & 96.8 & 0.03 & 0.10 (0.002) & 94.2 & 0.09 \\
			\hline
			\multicolumn{11}{c}{{\textbf{Heterogenous Error}}}\\
			\hline
			& & \multicolumn{3}{c}{{{\bfseries Normal}}}
			& \multicolumn{3}{c}{{{\bfseries Log-Normal}}}
			& \multicolumn{3}{c}{{{\bfseries Cauchy}}}\\
			\cmidrule(r){3-5}
			\cmidrule(lr){6-8}
			\cmidrule(l){9-11}
			n & method & mse  & PCD & $\delta_{0.5}$ & mse  & PCD  & $\delta_{0.5}$ & mse  & PCD & $\delta_{0.5}$\\
			\cmidrule(r){3-5}
			\cmidrule(lr){6-8}
			\cmidrule(l){9-11}
			100 & LS & 1.97 (0.072) & 79.8 & 1.13 & 7.75 (0.514) & 70.4 & 2.22 &  & 56.4 & 4.02 \\
			& P(0.5) & 0.84 (0.029) & 86.1 & 0.55 & 1.21 (0.045) & 84.3 & 0.74 & 1.82 (0.071) & 80.5 & 1.07 \\
			& P(0.25) & 1.37 (0.049) & 82.1 & 0.90 & 1.56 (0.051) & 80.5 & 1.04 & 6.20 (0.261) & 69.8 & 2.25 \\
			& Huber & 0.84 (0.031) & 85.9 & 0.57 & 1.33 (0.046) & 82.8 & 0.85 & 2.69 (0.106) & 77.0 & 1.42 \\
			200 & LS & 0.99 (0.035) & 84.7 & 0.66 & 4.16 (0.237) & 75.2 & 1.62 &  & 55.1 & 4.19 \\
			& P(0.5) & 0.41 (0.014) & 90.2 & 0.28 & 0.58 (0.024) & 89.4 & 0.37 & 0.79 (0.030) & 86.7 & 0.52 \\
			& P(0.25) & 0.64 (0.021) & 87.4 & 0.45 & 0.74 (0.024) & 86.1 & 0.54 & 2.48 (0.096) & 76.9 & 1.40 \\
			& Huber & 0.39 (0.013) & 90.3 & 0.27 & 0.69 (0.027) & 87.7 & 0.45 & 1.17 (0.044) & 83.4 & 0.78 \\
			400 & LS & 0.51 (0.018) & 89.0 & 0.35 & 2.48 (0.133) & 79.3 & 1.20 &  & 54.7 & 4.25 \\
			& P(0.5) & 0.20 (0.007) & 93.2 & 0.14 & 0.29 (0.011) & 92.6 & 0.17 & 0.32 (0.011) & 91.2 & 0.22 \\
			& P(0.25) & 0.30 (0.009) & 91.3 & 0.22 & 0.39 (0.012) & 89.9 & 0.28 & 0.99 (0.030) & 83.0 & 0.78 \\
			& Huber & 0.20 (0.007) & 93.2 & 0.14 & 0.34 (0.012) & 91.4 & 0.22 & 0.53 (0.016) & 88.4 & 0.37 \\
			800 & LS & 0.25 (0.008) & 92.2 & 0.17 & 1.25 (0.159) & 84.2 & 0.73 &  & 56.4 & 4.00 \\
			& P(0.5) & 0.10 (0.004) & 95.3 & 0.07 & 0.14 (0.006) & 94.7 & 0.09 & 0.16 (0.006) & 93.9 & 0.11 \\
			& P(0.25) & 0.14 (0.005) & 94.0 & 0.10 & 0.18 (0.006) & 92.9 & 0.14 & 0.49 (0.015) & 88.0 & 0.39 \\
			& Huber & 0.09 (0.004) & 95.3 & 0.06 & 0.17 (0.006) & 93.9 & 0.11 & 0.26 (0.009) & 91.8 & 0.19 \\
			\bottomrule
		\end{tabular}
	\end{center}
\end{table}

Comparison is made among four methods. They are: lsA-learning, robust regression with $\rho_{0.5}$ (RR($\rho_{0.5}$)), robust regression with $\rho_{0.25}$ (RR($\rho_{0.25}$)), and robust regression with Huber loss (RR(H)). The error terms $\epsilon_i$ are taken as standard i.i.d. normal, log-normal or Cauchy distribution, and independent with both $A$ and $\Xbf$. It is easy to check that the conditional independence assumption $\epsilon\perp A|\Xbf$ is satisfied, and $g^{\mathrm{opt}}_{\mu}=g^{\mathrm{opt}}_{\tau}=\identity\{\betabf_0\trans\tilde{\Xbf}_i>0\}$. We consider four different sample sizes 100, 200, 400 and 800. To evaluate the performance of each method, we compare three groups of criteria: (1) the mean squared error $\|\hat{\betabf}-\betabf_{0}\|^2_2$ (mse), which measures the distance between estimated parameters and the true parameter $\betabf_0$; (2) the percentage of making correct decisions (PCD), which are calculated based on a validation set with 10000 observations. Specifically, we take the formula $100*\left(1-\sum_{i=1}^{N_T}|\identity\{\hat{\betabf}\trans\tilde{\Xbf}_i>0\}-\identity\{\betabf_0\trans\tilde{\Xbf}_i>0\}|/N_T\right)$ with $N_T=10000$; (3) the differences of $V_\mu(g)$ and $V_{0.5-q}(g)$ between the optimal ITR and the estimated ITR, where $\delta_{\mu}=V_{\mu}(g_{\mu}^{\mathrm{opt}})-V_{\mu}(\hat{g})$ and $\delta_{\tau}=V_{\tau-q}(g_{\mu}^{\mathrm{opt}})-V_{\tau-q}(\hat{g})$, $\forall\tau\in(0,1)$. $V_{\mu}(g)$ and $V_{\tau-q}(g)$ (defined in Section 2.1) are estimated from the validation set as well, and they evaluate the overall performance of an ITR $g$, where the former one focuses on the response's mean and the latter one focuses on the response's conditional $\tau$-th quantile. Under our setting, $\delta_{\mu}=\delta_{0.5}$ when they both exists. Thus, only $\delta_{0.5}$ is reported. For each scenario, we take 1000 replications. All numbers in the tables are based on the sample average of all replications. We further report the standard errors of mse to evaluate the variability of the corresponding statistics.

When the propensity score is constant, lsA-learning is equivalent to both Q- and A-learning under our setting. If we compare the performance of the methods under homogeneous and heterogeneous errors, the first thing we find is that lsA-learning works much worse under the heterogeneous errors, while all other methods are generally less affected by the heterogeneity of the errors. When the baseline function is misspecified as in Model I, under the homogeneous normal errors, RR(H) works slightly better than lsA-learning, while $\mathrm{RR}(\rho_{0.25})$ works the worst. However, the difference in general is small. For the homogeneous log-normal errors, again RR(H) works the best, while $\mathrm{RR}(\rho_{0.5})$ and $\mathrm{RR}(\rho_{0.25})$ have similar performance, and lsA-learning works the worst. Under the homogeneous Cauchy errors, $\mathrm{RR}(\rho_{0.5})$ works the best and RR(H) has a close performance. The lsA-learning is no longer consistent, and its mse explodes. The actual numbers are too large and thus leave as blank in Table \ref{table:modelI_constant} and \ref{table:modelII_constant}. Furthermore, with the Cauchy errors, the PCD of lsA-learning are less than 60\% under all scenarios, while other methods' PCD can be as high as 90\%. When baseline function is correctly specified as in Model II, under homogeneous normal errors, lsA-learning performs the best. However, in this case RR(H) also has a very close performance, and thus makes no difference from a practical point of view to choose between these two methods. The results of Model II under other cases draw similar conclusion as Model I. To sum up, the overall conclusion is that, under the conditional independence assumption, the proposed robust regression method RR(M) is more efficient than Q-, A- and lsA-learning in the circumstances when observations have skewed, heterogeneous or heavy-tailed errors. On the other hand, when the error terms indeed follows i.i.d. normal distribution, the loss of efficiency of RR(M) is not significant. This is especially true when Huber loss is applied.

\subsection{Simulation Study II: error terms interactive with treatment}
We consider the following model with p=2,
\begin{equation*}
Y_i=1 + 0.5\sin[\pi(X_{i1}-X_{i2})]+
0.25(1+X_{i1}+2X_{i2})^2+(A_i-\pi(\Xbf_{i}))\thetabf_0\trans\tilde{\Xbf}_i+\sigma(\Xbf_{i},A_i)\epsilon_i,
\end{equation*}
where $\Xbf_{i}=(X_{i1},X_{i2})\trans$, $\tilde{\Xbf}_i=(1,\Xbf_i\trans)\trans$, $\sigma(\Xbf_{i},A_i)=1+A_i d_0 X_{i1}^2$, $\thetabf_0\trans=(0.5,2,-1)$ and $X_{ik}$ are i.i.d. Uniform[-1,1].

Similar as Section 4.1, we take linear forms for both the baseline and the contrast functions, where $\varphi(\Xbf;\gammabf)=\gammabf\trans\tilde{\Xbf}$, $C(\Xbf;\betabf)=\betabf\trans \Wbf$ and $\Wbf=(\tilde{\Xbf},X_{1}^2,X_{2}^2,X_{1}X_{2})$. $d_0=5$, 10 or 15. The error terms $\epsilon_i$ follows i.i.d. N(0,1) or Gamma(1,1)-1 distribution. The propensity scores $\pi(\cdot)$ are known, and we consider  both the constant case $\pi(\Xbf_i)=0.5$ and the non-constant case $\pi(\Xbf_i)=\mathrm{logit}(\Xbf_{i1}-\Xbf_{i2})$. We report only the result of the constant case (Table \ref{table:interacted_constant_ps}), and allocate the non-constant case to Appendix B.

\begin{table}[ht]
	\setlength{\tabcolsep}{2.5pt}
	\caption{Summary results with constant propensity scores when errors interacted with treatment. Least square stands for lsA-learning. Pinball(0.5) stands for robust regression with pinball loss and parameter $\tau=0.5$. Pinball(0.25) stands for robust regression with pinball loss and parameter $\tau=0.25$.  Huber stands for robust regression with Huber loss, where parameter $\alpha$ is tuned automatically with R function rlm.}
	\label{table:interacted_constant_ps}
	\begin{center}
		\begin{tabular}{@{}lll @{\hskip 0.5cm} rrr @{\hskip 0.5cm} rrr @{\hskip 0.5cm} rrr  @{\hskip 0.5cm} rrr}
			\toprule
			& & & \multicolumn{3}{c}{{{\bfseries Least Square}}} 
			& \multicolumn{3}{c}{{{\bfseries Pinball(0.5)}}} 
			& \multicolumn{3}{c}{{{\bfseries Pinball(0.25)}}} 
			& \multicolumn{3}{c}{{{\bfseries Huber}}}\\
			\cmidrule(r){4-6}
			\cmidrule(lr){7-9}
			\cmidrule(lr){10-12}
			\cmidrule(lr){13-15}
			Error & $d_0$ & n & $ \delta_{\mu}$ & $ \delta_{0.5}$ & $ \delta_{0.25}$ & $ \delta_{\mu}$ & $ \delta_{0.5}$ & $ \delta_{0.25}$ & $ \delta_{\mu}$ & $ \delta_{0.5}$ & $ \delta_{0.25}$ 
			& $ \delta_{\mu}$ & $ \delta_{0.5}$ & $ \delta_{0.25}$ \\
			\midrule
			Normal & 5 & 100 & 0.16 & 0.16 & 0.31 & 0.16 & 0.16 & 0.27 & 0.25 & 0.25 & 0.17 & 0.14 & 0.14 & 0.26 \\ 
			&  & 200 & 0.09 & 0.09 & 0.24 & 0.10 & 0.10 & 0.19 & 0.18 & 0.18 & 0.09 & 0.08 & 0.08 & 0.19 \\ 
			&  & 400 & 0.05 & 0.05 & 0.18 & 0.07 & 0.07 & 0.12 & 0.15 & 0.15 & 0.05 & 0.05 & 0.05 & 0.13 \\ 
			&  & 800 & 0.02 & 0.02 & 0.14 & 0.05 & 0.05 & 0.09 & 0.14 & 0.14 & 0.04 & 0.03 & 0.03 & 0.09 \\ 
			& 10 & 100 & 0.28 & 0.28 & 0.92 & 0.22 & 0.22 & 0.81 & 0.39 & 0.39 & 0.40 & 0.21 & 0.21 & 0.82 \\ 
			&  & 200 & 0.19 & 0.19 & 0.85 & 0.15 & 0.15 & 0.71 & 0.33 & 0.33 & 0.28 & 0.13 & 0.13 & 0.72 \\ 
			&  & 400 & 0.12 & 0.12 & 0.79 & 0.10 & 0.10 & 0.60 & 0.30 & 0.30 & 0.23 & 0.09 & 0.09 & 0.63 \\ 
			&  & 800 & 0.06 & 0.06 & 0.73 & 0.07 & 0.07 & 0.50 & 0.27 & 0.27 & 0.22 & 0.06 & 0.06 & 0.54 \\ 
			& 15 & 100 & 0.35 & 0.35 & 1.55 & 0.25 & 0.25 & 1.40 & 0.47 & 0.47 & 0.62 & 0.26 & 0.26 & 1.43 \\ 
			&  & 200 & 0.27 & 0.27 & 1.48 & 0.18 & 0.18 & 1.31 & 0.45 & 0.45 & 0.45 & 0.18 & 0.18 & 1.34 \\ 
			&  & 400 & 0.19 & 0.19 & 1.47 & 0.13 & 0.13 & 1.17 & 0.44 & 0.44 & 0.37 & 0.12 & 0.12 & 1.23 \\ 
			&  & 800 & 0.12 & 0.12 & 1.39 & 0.09 & 0.09 & 1.03 & 0.41 & 0.41 & 0.35 & 0.08 & 0.08 & 1.07 \\ 
			Gamma & 5 & 100 & 0.15 & 0.18 & 0.31 & 0.15 & 0.11 & 0.16 & 0.22 & 0.12 & 0.09 & 0.12 & 0.09 & 0.15 \\ 
			&  & 200 & 0.09 & 0.12 & 0.26 & 0.10 & 0.06 & 0.10 & 0.18 & 0.08 & 0.05 & 0.08 & 0.05 & 0.09 \\ 
			&  & 400 & 0.05 & 0.07 & 0.21 & 0.08 & 0.03 & 0.07 & 0.16 & 0.06 & 0.04 & 0.06 & 0.02 & 0.07 \\ 
			&  & 800 & 0.02 & 0.04 & 0.17 & 0.07 & 0.03 & 0.06 & 0.15 & 0.06 & 0.03 & 0.05 & 0.02 & 0.07 \\ 
			& 10 & 100 & 0.26 & 0.33 & 0.90 & 0.22 & 0.16 & 0.54 & 0.39 & 0.13 & 0.27 & 0.22 & 0.14 & 0.50 \\ 
			&  & 200 & 0.19 & 0.29 & 0.88 & 0.17 & 0.08 & 0.44 & 0.37 & 0.10 & 0.22 & 0.17 & 0.07 & 0.41 \\ 
			&  & 400 & 0.12 & 0.24 & 0.87 & 0.13 & 0.04 & 0.39 & 0.35 & 0.08 & 0.19 & 0.14 & 0.03 & 0.36 \\ 
			&  & 800 & 0.06 & 0.17 & 0.78 & 0.12 & 0.03 & 0.37 & 0.33 & 0.07 & 0.19 & 0.13 & 0.02 & 0.35 \\ 
			& 15 & 100 & 0.36 & 0.57 & 1.52 & 0.30 & 0.31 & 0.98 & 0.53 & 0.19 & 0.40 & 0.32 & 0.28 & 0.89 \\ 
			&  & 200 & 0.28 & 0.53 & 1.51 & 0.22 & 0.19 & 0.81 & 0.55 & 0.16 & 0.29 & 0.24 & 0.16 & 0.71 \\ 
			&  & 400 & 0.19 & 0.47 & 1.50 & 0.17 & 0.13 & 0.73 & 0.57 & 0.15 & 0.26 & 0.21 & 0.11 & 0.63 \\ 
			&  & 800 & 0.11 & 0.43 & 1.50 & 0.15 & 0.11 & 0.71 & 0.58 & 0.15 & 0.24 & 0.18 & 0.09 & 0.62 \\ 
			\bottomrule
		\end{tabular}
	\end{center}
\end{table}

We compare the performance of four methods: lsA-learning, robust regression with $\rho_{0.5}$ ($\mathrm{RR}(\rho_{0.5})$), robust regression with $\rho_{0.25}$ ($\mathrm{RR}(\rho_{0.25})$) and robust regression with Huber loss ($\mathrm{RR}(H)$). We consider four different sample sizes 100, 200, 400 and 800. For each scenario, we again simulate 1000 replications. When error terms are interactive with treatment, the true $\betabf_0$ associated with  $g_{\mu}^{\mathrm{opt}}$ and $g_{\tau}^{\mathrm{opt}}$ are different. Specifically, under our model, $\betabf_0=(\thetabf_0\trans,0,0,0)\trans$ for $g_{\mu}^{\mathrm{opt}}$, $\betabf_0=(\thetabf_0\trans,d_0 F^{-1}_{\epsilon}(0.5),0,0)\trans$ for $g_{0.5}^{\mathrm{opt}}$ and $\betabf_0=(\thetabf_0\trans,d_0 F^{-1}_{\epsilon}(0.25),0,0)\trans$ for $g_{0.25}^{\mathrm{opt}}$. Thus, the two criteria, mse and PCD used in simulation study I,  are no longer meaningful. So we evaluate the performance of methods in this simulation study based on value differences $\delta_{\mu}$, $\delta_{0.5}$ and $\delta_{0.25}$.  


Based on Theorem \ref{thm:lsA}, we can prove that $\hat{g}^{A}_{LS}(\xbf)$ is consistent which converges to $g^{\mathrm{opt}}_{\mu}$ as sample size goes to infinity. This is shown in Table \ref{table:interacted_constant_ps} such that the $\delta_{\mu}$ column for the lsA-learning method converges to 0 as sample size increases. We also know under Normal error terms, $\delta_{0.5}=\delta_{\mu}$. Thus, the $\delta_{0.5}$ column for the lsA-learning method also converges to 0. However, all other columns in Table \ref{table:interacted_constant_ps} converge to a positive constant instead of 0 as sample size goes to infinity.

Another observation we discover from Table \ref{table:interacted_constant_ps} is $\mathrm{RR}(H)$ and $\mathrm{RR}(\rho_{0.5})$ perform similarly. One additional observation we have is even though lsA-learning outperform all other methods in $\delta_{\mu}$ when sample size is large. It may be worse than $\mathrm{RR}(\rho_{0.5})$ and $\mathrm{RR}(H)$ when sample size is small. This is due to the fact that lsA-learning is inefficient under the heteroscedastic or skewed errors. The last observation we have is overall lsA-learning, $\mathrm{RR}(\rho_{0.5})$  and $\mathrm{RR}(\rho_{0.25})$  perform best at the columns $\delta_{\mu}$, $\delta_{0.5}$ and $\delta_{0.25}$ accordingly. The reason is given in the Remark under Theorem \ref{thm:approximation_quantile}, which shows that $\hat{g}^{R}_{\rho(\tau)}$ $(\triangleq\identity\{C(\xbf;\hat{\betabf}^R_{\rho(\tau)})>0\})$ in general approximates the unknown optimal ITR $g^{\mathrm{opt}}_{\tau}$ even when the conditional independence assumption $\epsilon\perp A|\Xbf$ does not hold.    

\section{Application to AIDS study}
\label{section:aids}
We illustrate the proposed robust regression method to data from AIDS Clinical Trials Group Protocol 175 (ACTG175), which has been previously studied by various authors \citep*{leon2003semiparametric,tsiatis2008covariate,zhang2008improving,lu2011variable}. In the study, 2139 HIV-infected subjects were randomized to four different treatment groups in equal proportions, and the treatment groups are zidovudine (ZDV) monotherapy, ZDV + didanosine (ddI), ZDV + zalcitabine, and ddI monotherapy. Following \cite{lu2011variable}, we choose CD4 count $(\mathrm{cells/mm}^3)$ at $20\pm5$ weeks post-baseline as the primary continuous outcome $Y$, and include five continuous covariates and seven binary covariates as our covariates. They are: 1. age (years), 2. weight (kg), 3. karnof=Karnofsky score (scale of 0-100), 4. cd40=CD4 count $(\mathrm{cells/mm}^3)$ at baseline, 5. cd80=CD8 count $(\mathrm{cells/mm}^3)$ at baseline, 6. hemophilia=hemophilia (0=no, 1=yes), 7. homosexuality=homosexual activity (0=no, 1=yes), 8. drugs=history of intravenous drug use (0=no, 1=yes), 9. race (0=white, 1=non-white), 10. gender (0=female, 1=male), 11. str2= antiretroviral history (0=naive, 1=experienced), and 12. sympton=symptomatic status (0=asymptomatic, 1=symptomatic).
For brevity, we only compare the treatment ZDV + didanosine (ddI) $(A=1)$ and ZDV + zalcitabine $(A=0)$, and restrict our samples to subjects receiving these two treatments. Thus, the propensity scores $\pi(\Xbf_i)\equiv 0.5$ in our restricted samples as the patients are assigned into one of two treatments with equal probability.

In our analysis, we assume linear models for both the baseline and the contrast functions. For interpretability, we keep the response $Y$ (the CD4 count) at its original scale, which is also consistent with the way clinicians think about the outcome in practice \citep{tsiatis2008covariate}. We plot the scatter plot of response Y against age. It shows some skewness and heterogeneity. With some preliminary analysis (fitting full model with lsA-learning and RR(M)), we find that only covariates age, homosexuality and race may possibly interact with the treatment. So in our final model, only these three covariates are included in the contrast function, while at the same time we still keep all twelve covariates in the baseline function. The estimated coefficients associated with their corresponding standard errors and p-values are given in Table \ref{table:aids}, where standard errors are estimated with 1000 bootstrap samples (parametric bootstrap) and p-values are calculated with normal approximation. Only coefficients included in the contrast function are shown.

\begin{table}[ht]
	\setlength{\tabcolsep}{2.3pt}
	\caption{Analysis results for AIDS data. Est. stands for estimate; SE stands for standard error; PV stands for p-value. All p-values which are significant at level 0.1 are highlighted.}
	\label{table:aids}
	\begin{center}
		\begin{tabular}{lrrrrrrrrrrrr}
			\toprule
			& \multicolumn{3}{c}{{{\bfseries Least Square}}}
			& \multicolumn{3}{c}{{{\bfseries Pinball(0.5)}}}
			& \multicolumn{3}{c}{{{\bfseries Pinball(0.25)}}}
			& \multicolumn{3}{c}{{{\bfseries Huber}}}\\
			\cmidrule(r){2-4}
			\cmidrule(lr){5-7}
			\cmidrule(l){8-10}
			\cmidrule(l){11-13}
			Variable & Est. & SE & PV & Est. & SE & PV & Est. & SE & PV & Est. & SE & PV \\
			\midrule
			intercept & -42.61 & 32.93 & 0.196 & -33.45 & 37.32 & 0.370 & -35.77 & 39.17 & 0.361 & -42.76 & 31.40 & 0.173 \\ 
			age       & 3.13 & 0.85 & \textbf{0.000} & 2.62 & 0.97 & \textbf{0.007} & 2.46 & 1.06 & \textbf{0.020} & 2.80 & 0.79 & \textbf{0.000} \\ 
			homosexuality      & -40.66 & 16.73 & \textbf{0.015} & -33.18 & 17.68 & 0.061 & -35.38 & 18.28 & 0.053 & -27.33 & 15.19 & 0.072 \\ 
			race      & -25.70 & 17.69 & 0.146 & -33.56 & 18.12 & 0.064 & -34.21 & 18.32 & 0.062 & -25.29 & 16.08 & 0.116 \\
			\bottomrule
		\end{tabular}
	\end{center}
\end{table}

From Tables \ref{table:aids}, we make the following observations. First, lsA-learning (equivalent to Q- and A-learning with this model setting) and robust regression with pinball loss and Huber loss all have estimates with the exact same signs. Second, the estimated coefficients are distinguishable across different methods. Third, the covairiate homosexuality is significant under lsA-learning, but it is not significant under robust regression with either pinball losses or Huber loss, when the significant level $\alpha$ is set to 0.05.

We could further estimate the values $(V_{\mu}(\hat{g}))$ associated with each method by either the inverse probability weighted estimator (IPWE) \citep{robin2000marginal} or the augmented inverse probability weighted estimator (AIPWE) \citep{robins1994estimation}, where
\begin{align*}
\hat{V}^{\mathrm{IPWE}}_{\mu}(\hat{g})=&\frac{\sum_{i=1}^n\identity{\{A_i=\hat{g}(\Xbf_i)\}}Y_i/p(A_i|\Xbf_i)}
{\sum_{i=1}^n\identity{\{A_i=\hat{g}(\Xbf_i)\}}/p(A_i|\Xbf_i)},\\
\hat{V}^{\mathrm{AIPWE}}_{\mu}(\hat{g})=&\frac{1}{n}\sum_{i=1}^{n}\hat{\expect}(Y_i|\Xbf_i,\hat{g}(\Xbf_i))
+\frac{1}{n}\sum_{i=1}^{n}\frac{\identity{\{A_i=\hat{g}(\Xbf_i)\}}}{p(A_i|\Xbf_i)}\left[Y_i-\hat{\expect}(Y_i|\Xbf_i,A_i)\right],
\end{align*}
$\hat{\expect}(Y_i|\Xbf_i,A_i))=\varphi(\Xbf_i;\hat{\gammabf})+\left\{A_i-p(A_i|\Xbf_i)\right\}C(\Xbf_i;\hat{\betabf})$, and $p(A_i|\Xbf_i)\equiv0.5$. Both $\hat{V}^{\mathrm{IPWE}}_{\mu}(\hat{g})$ and $\hat{V}^{\mathrm{AIPWE}}_{\mu}(\hat{g})$ are consistent estimator of value $V_{\mu}(\hat{g})$, and their asymptotic covariance matrix can also be consistently estimated from the data \citep{zhang2012robust,mckeague2014estimation}. The estimates of $(V_{\mu}(\hat{g}))$ and their corresponding 95\% confidence interval of four methods based on both IPWE and AIPWE are given in Table~\ref{table:aids_value}.  

\begin{table}[ht]
	\setlength{\tabcolsep}{4pt}
	\caption{Result of estimated values and their corresponding 95\% confidence interval for four methods based on IPWE and AIPWE.SE stands for standard error. CI stands for 95\% confidence interval.}
	\label{table:aids_value}
	\begin{center}
		\begin{tabular}{llccc}
			\toprule
			Estimator & method & Value & SE & CI \\ 
			\midrule
			IPWE  & Least Square & 405.05 & 6.72 & (391.88, 418.22) \\ 
			& Pinball(0.5) & 406.77 & 6.71 & (393.63, 419.92) \\ 
			& Pinball(0.25) & 406.07 & 6.73 & (392.87, 419.26) \\ 
			& Huber & 407.03 & 6.71 & (393.87, 420.18) \\ 
			AIPWE & Least Square & 404.39 & 6.12 & (392.40, 416.38) \\ 
			& Pinball(0.5) & 405.93 & 6.13 & (393.92, 417.94) \\ 
			& Pinball(0.25) & 403.60 & 6.62 & (390.62, 416.58) \\ 
			& Huber & 406.00 & 6.15 & (393.95, 418.04) \\ 
			\bottomrule
		\end{tabular}
	\end{center}
\end{table}

From Table~\ref{table:aids_value}, robust regression with $\rho_{0.5}$ and Huber loss perform slightly better than lsA-learning, while robust regression with $\rho_{0.25}$ performs worse than lsA-learning when the values $(V_{\mu}(\hat{g}))$ is estimated based on AIPWE. We conduct KCI-test to check the conditional independence assumption $\epsilon\perp A|\Xbf$. For $\mathrm{RR}(\rho(0.5))$, $\mathrm{RR}(\rho(0.25))$ and RR(H), their p-values associated with KCI-test are 0.060, 0.002 and 0.083 respectively. The conditional independence assumption holds at the significance level of 0.05 for $\mathrm{RR}(\rho(0.5))$ and RR(H), so the estimated ITR can be thought to maximize $V_{\mu}(g)$. On the other hand, this assumption doesn't hold for $\mathrm{RR}(\rho(0.25))$, and its estimated ITR doesn't maximize $V_{\mu}(g)$, instead it approximately maximizes $V_{0.25-q}(g)$. This partly explains the relatively bad performance of RR($\rho_{0.25}$) in Table~\ref{table:aids_value}. Again, as $\mathrm{RR}(\rho(0.5))$ and RR(H) are more robust against heterogeneous, right skewed errors comparing with the least square method, they slightly outperform lsA-learning in term of $V_{\mu}(g)$.

\section{Discussion}

In this article, we propose a new general loss based robust regression framework for estimating the optimal individualized treatment rules. This new method has the desired property to be robust against skewed, heterogeneous, heavy-tailed errors and outliers. And similar as A-learning, it produces consistent estimates of the optimal ITR even when the baseline function is misspecified. However, the consistency of the proposed method does require the key conditional independence assumption $\epsilon\perp A|\Xbf$, which is somewhat stronger than the condition needed for the consistency of Q- and A-learning $(\expect(\epsilon|\Xbf,A)=0)$. So there are situations when the classical Q- and A-learning are more appropriate to apply. Furthermore, we also point out in the article that when pinball loss $\rho_{\tau}$ is chosen and the assumption $\epsilon\perp A|\Xbf$ doesn't hold, the estimated ITR approximately maximize the conditional $\tau$-th quantile and thus maximize $V_{\tau-q}(g)$. From a practice point of view, there are situations when maximizing $V_{\tau-q}(g)$ is a much more reasonable approach comparing with maximizing $V_{\mu}(g)$, especially when the conditional distribution of response $Y$ is highly skewed to one side.

In practice, there are cases when multiple treatment groups need to be compared simultaneously. For brevity, we have limited our discussion to two treatment groups. However, the proposed method
can be readily extended to multiple cases by just replacing equation \eqref{eq:A-general-loss} with the following more complex form,
\begin{equation*}
L_{3n}(\betabf, \gammabf)=\frac{1}{n}\sum_{i=1}^{n}M\left[Y_i-\varphi(\Xbf_i;\gammabf)-\sum_{k=1}^{K-1}(I(A_i=k)-\pi_k(\Xbf_i))C_k(\Xbf_i;\betabf_k)\right],
\label{eq:A-general-loss2}
\end{equation*}
where $\mathcal{A}=\{1,\ldots,K\}$, $K$-th treatment is the baseline treatment, $\pi_k(\Xbf_i)=\Pr(A_i=k|\Xbf_i)$ and $C_k(\Xbf_i;\betabf_k)$ denotes the contrast function comparing $k$-th treatment and the baseline treatment. All Theorems can be easily extended to this multiple treatments setting as well.

When the dimension of prognostic variables is high, regularized regression is needed in order to produce parsimonious yet interpretable individualized treatment rules. Essentially this is a variable selection problem in the context of M-estimator, which has been previously studied in \cite*{wu2009variable,li2011nonconcave}, etc. This is an interesting topic that needs further investigation. Another interesting direction is to extend the current method to the multi-stage setting, where sequential decisions are made along the time line.

\section*{Appendix A: Proof of Asymptotic Properties}
We consider the following additive model,
\begin{equation*}
Y_i=\varphi_0(\Xbf_i)+\{A_i-\pi(\Xbf_i)\}C(\Xbf_i;\beta_0)+\epsilon_i,\; i=1,\ldots,n,
\end{equation*}
where $\varphi_0(\Xbf)$ is the baseline function, $C(\Xbf;\beta_0)$ is the contrast function, $\pi(\Xbf)$ is the propensity score, and $\epsilon$ is the error term. We estimate $(\betabf, \gammabf)$ by minimizing
\begin{equation}
L_{3n}(\betabf, \gammabf)=\frac{1}{n}\sum_{i=1}^{n}M\left[Y_i-\varphi(\Xbf_i;\gammabf)-\{A_i-\pi(\Xbf_i)\}C(\Xbf_i;\betabf)\right],
\label{eq:A-general-loss}
\end{equation}
where $\gammabf\in\Gamma$, $\betabf\in\mathcal{B}$ and $M:\real \rightarrow [0,\infty)$ is a convex function with minimum achieved at 0. We consider the following three types of loss functions, i.e., the pinball loss
\begin{equation*}
M(x)=\rho_\tau(x)\triangleq
\begin{cases}
(\tau-1)x, &\text{if } x<0\\
\tau x, &\text{if } x\geq0
\end{cases}
\end{equation*}
where $0<\tau<1$, the Huber loss
\begin{equation*}
M(x)=H_\alpha(x)\triangleq
\begin{cases}
0.5x^2, &\text{if } |x|<\alpha\\
\alpha|x|-0.5\alpha^2, &\text{if } |x|\geq\alpha
\end{cases}
\end{equation*}
for some $\alpha>0$, and the $\epsilon$-insensitive loss
\begin{equation*}
M(x)=J_\epsilon(x)\triangleq\max(0, |x|-\epsilon)
\end{equation*}
for some $\epsilon>0$. Define $\Delta C(\xbf;\betabf)=C(\xbf;\betabf)-C(\xbf;\betabf_0)$. Assume $\gammabf\in\Gamma$, $\betabf\in\mathcal{B}$ and $\gammabf'$ is any arbitrary fix point in $\Gamma$.


\noindent \textbf{Regularity conditions A:}
\begin{itemize}
	\item[(A1)] $\{(Y_i,\Xbf_i,A_i,\epsilon_i),i=1,\ldots,n\}$ are i.i.d random variables.
	\item[(A2)] $\epsilon_i\perp A_i|\Xbf_i$ $\forall i=1,\ldots,n$.
	\item[(A3)] $\expect|\Delta C(\Xbf_i;\betabf)|<\infty$ $\forall\betabf\in\mathcal{B}$.
	\item[(A4)] $\Pr\{\xbf\in\mathcal{X}:\;\Delta C(\xbf;\betabf)\neq 0\}>0$ for all $\betabf\neq\betabf_0$.
	\item[(A5)] $\expect|\varphi(\Xbf_i;\gammabf)|<\infty$ $\forall\gammabf\in\Gamma$.
	\item[(A6)] $G_2(\gammabf)$ has unique minimizer $\gammabf^*$, where $G_2(\gammabf)$ is the pointwise limit of $L_{3n}(\betabf_0,\gammabf)-L_{3n}(\betabf_0,\gammabf')$ in probability.
	\item[(A7)] $L_{3n}(\betabf,\gammabf)$ is strictly convex with respect to $(\betabf,\gammabf)$.
	\item[(A8)] $\epsilon|\Xbf=\xbf$ has nonzero density on $\mathbb{R}$ for almost all $\xbf\in\mathcal{X}$.
\end{itemize}

\begin{lemma}
	$\left|\rho_\tau(x-y)-\rho_\tau(x)\right|\leq |y|$, for all $\tau\in(0,1)$.
\end{lemma}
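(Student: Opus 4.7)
The plan is to exploit the fact that $\rho_\tau$ is Lipschitz with constant at most $1$, which follows immediately from its form. Since $\rho_\tau$ is piecewise linear with slope $\tau$ on $[0,\infty)$ and slope $\tau-1$ on $(-\infty,0]$, and since $\tau\in(0,1)$ gives $|\tau|\leq 1$ and $|\tau-1|=1-\tau\leq 1$, the function $\rho_\tau$ is $\max(\tau,1-\tau)$-Lipschitz on $\mathbb{R}$. Applying this Lipschitz bound to the pair of points $x-y$ and $x$, which differ by $|y|$, yields $|\rho_\tau(x-y)-\rho_\tau(x)|\leq \max(\tau,1-\tau)|y|\leq |y|$.

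If a self-contained elementary derivation is preferred (rather than invoking the Lipschitz property), I would split into four cases based on the signs of $x$ and $x-y$: (i) $x\geq 0$ and $x-y\geq 0$; (ii) $x<0$ and $x-y<0$; (iii) $x\geq 0$ and $x-y<0$; (iv) $x<0$ and $x-y\geq 0$. In cases (i) and (ii), the difference $\rho_\tau(x-y)-\rho_\tau(x)$ reduces directly to $-\tau y$ and $-(\tau-1)y$ respectively, whose absolute values are $\tau|y|$ and $(1-\tau)|y|$, each bounded by $|y|$.

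The only non-trivial work is in the mixed-sign cases (iii) and (iv). In case (iii), the hypothesis forces $y>x\geq 0$ (hence $y>0$), and a short computation gives $\rho_\tau(x-y)-\rho_\tau(x)=(1-\tau)y-x$; since both $(1-\tau)y$ and $x$ lie in $[0,y)$, their difference has absolute value strictly less than $y=|y|$. Case (iv) is symmetric: $y\leq x<0$ forces $y<0$, and one obtains $\rho_\tau(x-y)-\rho_\tau(x)=x-\tau y=-|x|+\tau|y|$, whose absolute value is the maximum of $|x|-\tau|y|$ and $\tau|y|-|x|$, each of which is bounded by $|y|$ because $|x|\leq |y|$ and $\tau\in(0,1)$.

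There is essentially no substantive obstacle here; the proof is a direct calculation, and the only point requiring mild care is tracking the sign constraints in the two mixed cases so that the algebraic identities $|a-b|\leq \max(a,b)$ can be invoked without confusion. The Lipschitz viewpoint makes the lemma transparent, while the case analysis makes it fully explicit; either presentation would be suitable for the appendix.
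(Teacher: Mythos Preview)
Your proof is correct. Both the Lipschitz argument and the four-case analysis establish the lemma, and the Lipschitz route even yields the sharper bound $\max(\tau,1-\tau)\,|y|$.

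The paper takes a slightly different, more compressed path: it writes $\rho_\tau(u)=\tau u_{+}+(1-\tau)u_{-}$ and obtains
\[
\left|\rho_\tau(x-y)-\rho_\tau(x)\right|
=\left|\tau\{(x-y)_{+}-x_{+}\}+(1-\tau)\{(x-y)_{-}-x_{-}\}\right|
\leq |(x-y)_{+}-x_{+}|+|(x-y)_{-}-x_{-}|=|y|,
\]
the last equality being the elementary identity that the positive- and negative-part differences sum in absolute value to $|y|$. This avoids any case split and fits on two lines. Your Lipschitz formulation captures the same underlying reason (bounded slopes) in a single sentence and is arguably the most transparent; your case analysis is fully valid but longer than necessary compared with either the paper's decomposition or your own Lipschitz argument.
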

\begin{proof}
	\begin{align*}
	\left|\rho_\tau(x-y)-\rho_\tau(x)\right| &= \left|\tau\left\{(x-y)_{+}-x_{+}\right\}+(1-\tau)\left\{(x-y)_{-}-x_{-}\right\}\right|\\
	&\leq|(x-y)_{+}-x_{+}|+|(x-y)_{-}-x_{-}|=|y|
	\end{align*}
\end{proof}

\begin{lemma}
	\begin{align*}
	\rho_\tau(x-y)-\rho_\tau(x)=&-\tau y\identity\{x\geq 0\}+(1-\tau)y\identity\{x< 0\}+(y-x)\identity\{x\geq 0\}\identity\{y>x\}\\
	&+(x-y)\identity\{x< 0\}\identity\{y< x\},
	\end{align*}
	for all $\tau\in(0,1)$.
\end{lemma}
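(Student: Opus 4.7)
The identity is a pointwise algebraic statement for each $\tau\in(0,1)$ and each $(x,y)\in\real{2}$, so the natural strategy is a direct case analysis on the joint signs of $x$ and $x-y$. I would first rewrite the definition of the pinball loss in the unified form
\[
\rho_\tau(u)=\tau u\identity\{u\geq 0\}+(\tau-1)u\identity\{u<0\},
\]
which follows immediately from the piecewise formula, and then split $(x,y)$ into the four regions determined by the indicators $\identity\{x\geq 0\}$ and $\identity\{x-y\geq 0\}=\identity\{y\leq x\}$.

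In each region I would compute $\rho_\tau(x-y)-\rho_\tau(x)$ directly. The four cases should give:
\begin{enumerate}
\item $x\geq 0$, $y\leq x$: \ $\tau(x-y)-\tau x=-\tau y$;
\item $x\geq 0$, $y>x$: \ $(\tau-1)(x-y)-\tau x=-\tau y+(y-x)$;
\item $x<0$, $y\leq x$: \ $\tau(x-y)-(\tau-1)x=(1-\tau)y+(x-y)$;
\item $x<0$, $y>x$: \ $(\tau-1)(x-y)-(\tau-1)x=(1-\tau)y$.
\end{enumerate}
I would then evaluate the right-hand side of the claimed identity in each of these four regions, noting that the product $\identity\{x\geq 0\}\identity\{y>x\}$ is active only in case~2 (contributing $(y-x)$), while $\identity\{x<0\}\identity\{y<x\}$ is active only in case~3 (contributing $(x-y)$); the remaining two terms $-\tau y\identity\{x\geq 0\}$ and $(1-\tau)y\identity\{x<0\}$ supply the leading linear piece in every case. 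Matching term by term then yields the identity.

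The only slight bookkeeping concern is the boundary $y=x$ inside case~3, where my case split uses $y\leq x$ while the stated right-hand side uses the strict indicator $\identity\{y<x\}$. I would check this boundary separately: at $y=x$ we have $x-y=0$, so the computed expression $(1-\tau)y+(x-y)$ and the right-hand side $(1-\tau)y$ agree automatically. Apart from this minor point there is no real obstacle; the lemma is a purely algebraic identity verified piecewise.
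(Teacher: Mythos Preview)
Your proposal is correct and follows essentially the same approach as the paper: a direct case analysis verifying the algebraic identity. The paper partitions into six cases (by the sign of $x$, the sign of $y$, and the comparison $|x|\lessgtr|y|$), whereas you partition more economically into four cases (by the sign of $x$ and the sign of $x-y$); your four regions are exactly a coarsening of the paper's six, and your handling of the $y=x$ boundary in case~3 is a nice explicit touch the paper leaves implicit.
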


\begin{proof}
	Denote $D=\rho_\tau(x-y)-\rho_\tau(x)$.
	\begin{enumerate}
		\item If $x\geq0$, $y\leq0$ $\Rightarrow$ $D=-\tau y$;
		\item If $x\geq0$, $y>0$, $|x|\geq|y|$ $\Rightarrow$ $D=-\tau y$;
		\item If $x\geq0$, $y>0$, $|x|<|y|$ $\Rightarrow$ $D=-\tau y+(y-x)$;
		\item If $x<0$, $y\geq0$ $\Rightarrow$ $D=(1-\tau)y$;
		\item If $x<0$, $y<0$, $|x|\geq|y|$ $\Rightarrow$ $D=(1-\tau)y$;
		\item If $x<0$, $y<0$, $|x|<|y|$ $\Rightarrow$ $D=(1-\tau)y+(x-y)$;
	\end{enumerate}
	Combining the above 6 cases, Lemma 2 is proved.
\end{proof}

\noindent \textbf{Proof of Theorem 1.}
\begin{proof} Recall that the loss function defined in \eqref{eq:A-general-loss} takes the form
	\begin{equation*}
	L_{3n}(\betabf, \gammabf)=\frac{1}{n}\sum_{i=1}^{n}\rho_\tau\left[\varphi_0(\Xbf_i)-\varphi(\Xbf_i;\gammabf)+\epsilon_i
	-(A_i-\pi(\Xbf_i))\Delta C(\Xbf_i;\betabf)\right].
	\end{equation*}
	By definition,
	\begin{align*}
	(\hat{\betabf}^{R}_{\rho(\tau)},\hat{\gammabf}^{R}_{\rho(\tau)}) =& \argmin_{(\betabf,\gammabf)}L_{3n}(\betabf,\gammabf)-L_{3n}(\betabf_0,\gammabf')\\
	=& \argmin_{(\betabf,\gammabf)}\left[L_{3n}(\betabf,\gammabf)-L_{3n}(\betabf_0,\gammabf)\right]+
	\left[L_{3n}(\betabf_0,\gammabf)-L_{3n}(\betabf_0,\gammabf')\right],
	\end{align*}
	Define
	\begin{align*}
	S_{1n}(\betabf, \gammabf) =& L_{3n}(\betabf, \gammabf)-L_{3n}(\betabf_0, \gammabf)=1/n\sum_{i=1}^n d_{1i};\\
	S_{2n}(\betabf, \gammabf) =& L_{3n}(\betabf_0, \gammabf)-L_{3n}(\betabf_0, \gammabf')=1/n\sum_{i=1}^n d_{2i}
	\end{align*}
	where
	\begin{align*}
	d_{1i} =& \rho_\tau\left[\varphi_0(\Xbf_i)-\varphi(\Xbf_i;\gammabf)+\epsilon_i
	-(A_i-\pi(\Xbf_i))\Delta C(\Xbf_i;\betabf)\right]-\rho_\tau\left[\varphi_0(\Xbf_i)-\varphi(\Xbf_i;\gammabf)+\epsilon_i\right],\\
	d_{2i} =& \rho_\tau\left[\varphi_0(\Xbf_i)-\varphi(\Xbf_i;\gammabf)+\epsilon_i\right]-\rho_\tau\left[\varphi_0(\Xbf_i)-\varphi(\Xbf_i;\gammabf')+\epsilon_i\right].
	\end{align*}
	By Lemma 1, A3 and A5, $\expect|d_{1i}|\leq\expect|(A_i-\pi(\Xbf_i))\Delta C(\Xbf_i;\betabf)|\leq\expect|\Delta C(\Xbf_i;\betabf)|<\infty$ and $\expect|d_{2i}|\leq\expect|\varphi(\Xbf_i;\gammabf)-\varphi(\Xbf_i;\gammabf')|\leq\expect|\varphi(\Xbf_i;\gammabf)|+\expect|\varphi(\Xbf_i;\gammabf')|
	<\infty$. Then, by Law of Large Number, $\forall\;\betabf\in\mathcal{B}$, $\gammabf\in\Gamma$, we have $S_{1n}(\betabf, \gammabf)\inprob G_1(\betabf,\gammabf)\triangleq\expect (D)$, and $S_{2n}(\betabf, \gammabf)\inprob G_2(\gammabf)$, where
	\begin{align*}
	D=&\rho_\tau\left[\varphi_0(\Xbf)-\varphi(\Xbf;\gammabf)+\epsilon
	-\{A-\pi(\Xbf)\}\Delta C(\Xbf;\betabf)\right]-\rho_\tau\left[\varphi_0(\Xbf)-\varphi(\Xbf;\gammabf)+\epsilon\right].
	\end{align*}
	Below we show that a) $(\betabf_0,\gammabf^*)$ is the minimizer of $G_1(\betabf,\gammabf)+G_2(\gammabf)$, b) $(\betabf_0,\gammabf^*)$ is the unique minimizer. The consistency then follows from the argmax continuous mapping theorem under Assumption (A7).
	
	Denote $K_1=\varphi_0(\Xbf)-\varphi(\Xbf;\gammabf)+\epsilon$, $K_2=\{A-\pi(\Xbf)\}\Delta C(\Xbf;\betabf)$. By Lemma 2,
	\begin{align*}
	D =& -\tau K_2\identity\{K_1\geq0\}+(1-\tau) K_2\identity\{K_1<0\}+(K_2-K_1)\identity\{K_1\geq0\}\identity\{K_2>K_1\}\\
	&+(K_1-K_2)\identity\{K_1<0\}\identity\{K_2<K_1\}.
	\end{align*}
	Since $\epsilon\perp A|\Xbf$ and $\Pr(A|\Xbf)=\pi(\Xbf)$, applying double expectation rule with $\Xbf$, we have $\expect[-\tau K_2\identity\{K_1\geq0\}]=\expect[(1-\tau) K_2\identity\{K_1<0\}]=0$.
	Thus,
	\begin{equation}
	G_1(\betabf,\gammabf)=\expect[(K_2-K_1)\identity\{K_1\geq0\}\identity\{K_2>K_1\}]+\expect[(K_1-K_2)\identity\{K_1<0\}\identity\{K_2<K_1\}].
	\label{eq:Gfunction}
	\end{equation}
	It is easy to check $G_1(\betabf,\gammabf)\geq0$ and achieves minimal value 0 at point $(\betabf_0,\gammabf)$ for all $\gammabf\in\Gamma$. In addition, by A6, we know $G_2(\gammabf)$ has unique minimizer $\gammabf^*$. Combining the above two facts, a) is proved.
	
	Combining A4, A8 and \eqref{eq:Gfunction}, we could prove $G_1(\betabf,\gammabf)>0$ for all $\betabf\neq\betabf_0$ and $\gammabf\in\Gamma$. So b) holds.
\end{proof}

\noindent \textbf{Proof of Theorem 3.}
\begin{proof} (a) When $M(x)=H_{\alpha}(x)$, the proof follows similar steps as Theorem 1. The only difference is that $G_1(\betabf,\gammabf)$ takes a different expression now and we need to redo the proof of 1) $G_1(\betabf,\gammabf)>0$ $\forall\betabf\neq\betabf_0$, $\gammabf\in\Gamma$, and 2) $G_1(\betabf_0,\gammabf)=0$ $\forall\gammabf\in\Gamma$. By definition, $G_1(\betabf,\gammabf)\triangleq\expect (D)$, where
	\begin{align*}
	D=H_\alpha\left[\varphi_0(\Xbf)-\varphi(\Xbf;\gammabf)+\epsilon
	-\{A-\pi(\Xbf)\}\Delta C(\Xbf;\betabf)\right]-H_\alpha\left[\varphi_0(\Xbf)-\varphi(\Xbf;\gammabf)+\epsilon\right].
	\end{align*}
	Then, 2) holds immediately. Denote $K_1=\varphi_0(\Xbf)-\varphi(\Xbf;\gammabf)+\epsilon$, $K_2=\{A-\pi(\Xbf)\}\Delta C(\Xbf;\betabf)$. We have the following four cases:
	\begin{enumerate}
		\item If $K_1>\alpha$ then $H_\alpha(K_1-K_2)\geq\alpha(K_1-K_2)-0.5\alpha^2$. Thus, $D\geq-\alpha K_2$;
		\item If $K_1<-\alpha$ then $H_\alpha(K_1-K_2)\geq\alpha(K_2-K_1)-0.5\alpha^2$. Thus, $D\geq\alpha K_2$;
		\item If $K_1\in[-\alpha,\alpha]$ and $K_1-K_2\in[-\alpha,\alpha]$ then $D=1/2(K_1-K_2)^2-1/2K_1^2=-K_1K_2+1/2K_2^2$;
		\item If $K_1\in[-\alpha,\alpha]$ and $K_1-K_2\not\in[-\alpha,\alpha]$ then $H_\alpha(K_1-K_2)\geq1/2(K_1-K_2)^2-\left\{1/2(\alpha+|K_2|)^2-\left[\alpha(\alpha+|K_2|)-1/2\alpha^2\right]\right\}=1/2(K_1-K_2)^2-1/2K_1^2$. Thus, $D\geq1/2(K_1-K_2)^2-1/2K_1^2-1/2K_2^2=-K_1K_2$.
	\end{enumerate}
	Combining the above four equalities and inequalities,
	\begin{align*}
	G_1(\betabf,\gammabf)\geq& \expect[-\alpha K_2\identity\{K_1>\alpha\}] + \expect[\alpha K_2\identity\{K_1<-\alpha\}] +
	\expect[-K_1K_2\identity\{K_1\in[-\alpha,\alpha]\}]\\
	&+ \expect\left[1/2K_2^2\identity\left(\{K_1\in[-\alpha,\alpha]\}\cup\{K_1-K_2\in[-\alpha,\alpha]\}\right)\right]
	\end{align*}
	Since $\epsilon\perp A|\Xbf$ and $\Pr(A|\Xbf)=\pi(\Xbf)$, applying double expectation rule with $\Xbf$, we have $\expect[-\alpha K_2\identity\{K_1>\alpha\}]=\expect[\alpha K_2\identity\{K_1<-\alpha\}]=\expect[-K_1K_2\identity\{K_1\in[-\alpha,\alpha]\}]=0$.
	Thus,
	\begin{equation}
	G_1(\betabf;\gammabf)\geq\expect\left[1/2K_2^2\identity\left(\{K_1\in[-\alpha,\alpha]\}\cup\{K_1-K_2\in[-\alpha,\alpha]\}\right)\right].
	\label{eq:Gfunction_thm2}
	\end{equation}
	Combining \eqref{eq:Gfunction_thm2}, A4 and A8, we can check that 1) holds. Thus, part (a) is proved.
	
	(b) When $M(x)=J_{\epsilon}(x)$, similarly $D=J_\epsilon\left(K_1-K_2\right)-J_\epsilon\left(K_1\right)$. Notice that we have the following three cases:
	\begin{enumerate}
		\item If $K_1>\epsilon$ then $D\geq -K_2$;
		\item If $K_1<-\epsilon$ then $D\geq K_2$;
		\item If $K_1\in[-\epsilon,\epsilon]$ then $D\geq 0$;
	\end{enumerate}
	The rest of the proof follows similar steps as part (a).
\end{proof}

\noindent \textbf{Proof of Theorem 5.}
\begin{proof} From Theorem 1, $\betabf_{\tau}=\betabf_0$. Plugging this into Theorem 4 and applying double expectation rules, we have
	\begin{equation*}
	J(\tau)=\expect\left[f_{\epsilon}\left(\tilde{\Xbf}\trans\gammabf(\tau)-\varphi_0(\Xbf)|\Xbf\right)
	\left(\begin{array}{cc}
	\pi(\Xbf)\{1-\pi(\Xbf)\}\tilde{\Xbf}\tilde{\Xbf}\trans & \zerobf\\
	\zerobf & \tilde{\Xbf}\tilde{\Xbf}\trans
	\end{array}
	\right)\right]
	\end{equation*}
	and
	\begin{equation*}
	\Sigma(\tau,\tau)=\expect\left\{\left[\tau-\identity\left\{\epsilon<\tilde{\Xbf}\trans\gammabf(\tau)-\varphi_0(\Xbf)\right\}\right]^2
	\left(\begin{array}{cc}
	\pi(\Xbf)\{1-\pi(\Xbf)\}\tilde{\Xbf}\tilde{\Xbf}\trans & \zerobf\\
	\zerobf & \tilde{\Xbf}\tilde{\Xbf}\trans
	\end{array}
	\right)\right\}.
	\end{equation*}
	Thus, $\sqrt{n}(\hat{\betabf}(\tau)-\betabf_0)\indist N(\zerobf, J_{11}^{-1}(\tau)\Sigma_{11}(\tau,\tau)J_{11}^{-1}(\tau))$, where $J_{11}^{-1}(\tau)$ and $\Sigma_{11}(\tau,\tau)$ are defined as in Theorem 5. Conditional on $\Xbf$, $\identity\left\{\epsilon<\tilde{\Xbf}\trans\gammabf(\tau)-\varphi_0(\Xbf)\right\}$ is a binomial random variable with $p=\Pr\left(\epsilon<\tilde{\Xbf}\trans\gammabf(\tau)-\varphi_0(\Xbf)\right)$. Then, $\expect\left\{\left[\tau-\identity\{\epsilon<\tilde{\Xbf}\trans\gammabf(\tau)-\varphi_0(\Xbf)\}\right]^2|\Xbf\right\}=(p-\tau)^2+p(1-p)\leq \tau^2+|1-2\tau|$. Thus, $\Sigma_{11}(\tau,\tau)\leq\left(\tau^2+|1-2\tau|\right)\expect\left[\pi(\Xbf)\{1-\pi(\Xbf)\}\tilde{\Xbf}\tilde{\Xbf}\trans\right]$.
\end{proof}

\section*{Appendix B: Additional Simulation Results}
We conducted additional simulations with non-constant propensity scores. Specifically, we considered the following examples. 

\subsection*{Examples with error terms independent with treatment}
We consider the following two models with p=3,
\begin{itemize}
	\item Model I:
	\begin{equation*}
	Y_i=1+(X_{i1}-X_{i2})(X_{i1}+X_{i3})+\{A_i-\pi(\Xbf_{i})\}\betabf_0\trans\tilde{\Xbf}_i+\sigma(\Xbf_{i})\epsilon_i,
	\end{equation*}
	where $\Xbf_{i}=(X_{i1},X_{i2},X_{i3})\trans$ are multivariate normal with mean 0, variance 1, and $\mathrm{Corr}(X_{ij},X_{ik})=0.5^{|j-k|}$, $\tilde{\Xbf}_i=(1,\Xbf_i\trans)\trans$ and $\betabf_0=(0,1,-1,1)\trans$.
	\item Model II:
	\begin{equation*}
	Y_i=\gammabf_0\trans\tilde{\Xbf}_i+\{A_i-\pi(\Xbf_{i})\}\betabf_0\trans\tilde{\Xbf}_i+\sigma(\Xbf_{i})\epsilon_i,
	\end{equation*}
	where $\gammabf_0\trans=(0.5,4,1,-3)$, and $\Xbf_{i}$, $\tilde{\Xbf}_{i}$ and $\betabf_0$ are the same as Model I.
\end{itemize}

We take linear forms for both the baseline and the contrast functions, where $\varphi(\Xbf;\gammabf)=\gammabf\trans\tilde{\Xbf}$ and $C(\Xbf;\betabf)=\betabf\trans\tilde{\Xbf}$. We assume the propensity scores $\pi(\cdot)$ are known, and we study the non-constant case $(\pi(\Xbf_i)=\mathrm{logit}(\Xbf_{i1}-\Xbf_{i2}))$ here. In addition, We consider two different $\sigma(\Xbf_{i})$ functions, i.e., the homogeneous case with $\sigma(\Xbf_{i})=1$, and the heterogenous case with $\sigma(\Xbf_{i})=0.5+(X_{i1}-X_{i2})^2$. The simulation results are given in Table~\ref{table:modelI_nonconstant} and Table~\ref{table:modelII_nonconstant}.

\begin{table}[ht]
	\setlength{\tabcolsep}{4pt}
	\caption{Summary result of Model I with non-constant propensity scores. LS stands for lsA-learning. P(0.5) stands for robust regression with pinball loss and parameter $\tau=0.5$. P(0.25) stands for robust regression with pinball loss and parameter $\tau=0.25$. Huber stands for robust regression with Huber loss, where parameter $\alpha$ is tuned automatically with R function rlm. Column $\delta_{0.5}$ is multiplied by 10.}
	\label{table:modelI_nonconstant}
	\begin{center}
		\begin{tabular}{@{}ll ccc ccc ccc}
			\toprule
			\multicolumn{11}{c}{{\textbf{Homogeneous Error}}}\\
			\hline
			& & \multicolumn{3}{c}{{{\bfseries Normal}}}
			& \multicolumn{3}{c}{{{\bfseries Log-Normal}}}
			& \multicolumn{3}{c}{{{\bfseries Cauchy}}}\\
			\cmidrule(r){3-5}
			\cmidrule(lr){6-8}
			\cmidrule(l){9-11}
			n & method & mse  & PCD & $\delta_{0.5}$ & mse  & PCD  & $\delta_{0.5}$ & mse  & PCD & $\delta_{0.5}$\\
			\cmidrule(r){3-5}
			\cmidrule(lr){6-8}
			\cmidrule(l){9-11}
			100 & LS & 1.70 (0.061) & 81.9 & 0.91 & 2.90 (0.114) & 77.6 & 1.34 &  & 59.3 & 3.61 \\
			& P(0.5) & 1.90 (0.069) & 80.1 & 1.09 & 2.13 (0.073) & 78.3 & 1.25 & 3.54 (0.128) & 75.7 & 1.57 \\
			& P(0.25) & 2.35 (0.080) & 78.2 & 1.33 & 1.95 (0.076) & 80.4 & 1.08 & 8.45 (0.431) & 69.8 & 2.28 \\
			& Huber & 1.51 (0.053) & 82.1 & 0.89 & 1.77 (0.065) & 80.6 & 1.02 & 3.67 (0.127) & 75.4 & 1.60 \\
			200 & LS & 0.77 (0.026) & 86.8 & 0.50 & 1.35 (0.045) & 82.2 & 0.91 &  & 59.2 & 3.63 \\
			& P(0.5) & 0.88 (0.028) & 85.5 & 0.60 & 1.00 (0.029) & 83.0 & 0.79 & 1.54 (0.050) & 81.1 & 1.00 \\
			& P(0.25) & 1.06 (0.035) & 84.5 & 0.68 & 0.83 (0.027) & 85.9 & 0.59 & 3.61 (0.143) & 74.7 & 1.70 \\
			& Huber & 0.68 (0.022) & 87.3 & 0.46 & 0.81 (0.025) & 85.2 & 0.62 & 1.58 (0.052) & 80.7 & 1.03 \\
			400 & LS & 0.39 (0.012) & 90.2 & 0.28 & 0.65 (0.020) & 86.9 & 0.48 &  & 58.0 & 3.79 \\
			& P(0.5) & 0.43 (0.013) & 89.3 & 0.32 & 0.47 (0.014) & 88.4 & 0.38 & 0.73 (0.022) & 86.5 & 0.51 \\
			& P(0.25) & 0.53 (0.016) & 88.5 & 0.38 & 0.41 (0.013) & 90.5 & 0.27 & 1.50 (0.049) & 81.7 & 0.96 \\
			& Huber & 0.34 (0.010) & 90.6 & 0.25 & 0.39 (0.012) & 89.6 & 0.30 & 0.72 (0.022) & 86.3 & 0.53 \\
			800 & LS & 0.18 (0.006) & 93.3 & 0.13 & 0.32 (0.010) & 90.2 & 0.27 &  & 58.3 & 3.75 \\
			& P(0.5) & 0.21 (0.007) & 92.7 & 0.15 & 0.24 (0.007) & 91.5 & 0.20 & 0.36 (0.011) & 90.3 & 0.27 \\
			& P(0.25) & 0.28 (0.009) & 92.4 & 0.17 & 0.21 (0.007) & 93.4 & 0.13 & 0.78 (0.026) & 86.9 & 0.50 \\
			& Huber & 0.16 (0.005) & 93.7 & 0.11 & 0.19 (0.006) & 92.6 & 0.15 & 0.37 (0.010) & 89.9 & 0.28 \\
			\hline
			\multicolumn{11}{c}{{\textbf{Heterogenous Error}}}\\
			\hline
			& & \multicolumn{3}{c}{{{\bfseries Normal}}}
			& \multicolumn{3}{c}{{{\bfseries Log-Normal}}}
			& \multicolumn{3}{c}{{{\bfseries Cauchy}}}\\
			\cmidrule(r){3-5}
			\cmidrule(lr){6-8}
			\cmidrule(l){9-11}
			n & method & mse  & PCD & $\delta_{0.5}$ & mse  & PCD  & $\delta_{0.5}$ & mse  & PCD & $\delta_{0.5}$\\
			\cmidrule(r){3-5}
			\cmidrule(lr){6-8}
			\cmidrule(l){9-11}
			100 & LS & 2.84 (0.111) & 78.2 & 1.33 & 9.96 (0.773) & 72.0 & 2.06 &  & 55.2 & 4.18 \\
			& P(0.5) & 2.01 (0.082) & 80.6 & 1.09 & 2.18 (0.080) & 79.2 & 1.21 & 4.18 (0.189) & 74.1 & 1.81 \\
			& P(0.25) & 2.91 (0.110) & 76.7 & 1.52 & 3.22 (0.105) & 74.2 & 1.76 & 10.62 (0.475) & 65.3 & 2.87 \\
			& Huber & 1.90 (0.074) & 80.9 & 1.06 & 2.38 (0.090) & 78.1 & 1.32 & 5.06 (0.230) & 71.9 & 2.04 \\
			200 & LS & 1.46 (0.053) & 83.1 & 0.83 & 4.47 (0.371) & 76.8 & 1.51 &  & 56.3 & 4.04 \\
			& P(0.5) & 0.92 (0.033) & 86.4 & 0.55 & 0.98 (0.035) & 85.3 & 0.64 & 1.69 (0.065) & 81.5 & 0.98 \\
			& P(0.25) & 1.35 (0.049) & 83.3 & 0.81 & 1.47 (0.049) & 81.6 & 0.97 & 4.73 (0.241) & 71.9 & 2.05 \\
			& Huber & 0.86 (0.030) & 86.6 & 0.52 & 1.02 (0.036) & 84.7 & 0.68 & 2.11 (0.079) & 79.3 & 1.18 \\
			400 & LS & 0.74 (0.029) & 87.4 & 0.47 & 2.65 (0.402) & 81.4 & 1.04 &  & 56.2 & 4.06 \\
			& P(0.5) & 0.45 (0.016) & 90.2 & 0.29 & 0.44 (0.017) & 89.5 & 0.34 & 0.79 (0.029) & 87.2 & 0.49 \\
			& P(0.25) & 0.66 (0.025) & 88.3 & 0.41 & 0.70 (0.023) & 86.9 & 0.50 & 2.12 (0.091) & 79.5 & 1.19 \\
			& Huber & 0.43 (0.016) & 90.2 & 0.28 & 0.48 (0.018) & 89.0 & 0.36 & 1.01 (0.036) & 85.0 & 0.65 \\
			800 & LS & 0.36 (0.013) & 90.8 & 0.25 & 1.09 (0.066) & 85.0 & 0.69 &  & 56.3 & 4.02 \\
			& P(0.5) & 0.21 (0.008) & 93.2 & 0.14 & 0.24 (0.009) & 92.3 & 0.19 & 0.39 (0.014) & 90.5 & 0.27 \\
			& P(0.25) & 0.33 (0.013) & 91.7 & 0.21 & 0.36 (0.012) & 90.8 & 0.25 & 1.01 (0.034) & 84.9 & 0.65 \\
			& Huber & 0.20 (0.008) & 93.2 & 0.14 & 0.25 (0.009) & 92.1 & 0.19 & 0.49 (0.016) & 89.1 & 0.34 \\
			\bottomrule
		\end{tabular}
	\end{center}
\end{table}

\begin{table}[ht]
	\setlength{\tabcolsep}{4pt}
	\caption{Summary result of Model II with non-constant propensity scores. LS stands for lsA-learning. P(0.5) stands for robust regression with pinball loss and parameter $\tau=0.5$. P(0.25) stands for robust regression with pinball loss and parameter $\tau=0.25$. Huber stands for robust regression with Huber loss, where parameter $\alpha$ is tuned automatically with R function rlm. Column $\delta_{0.5}$ is multiplied by 10.}
	\label{table:modelII_nonconstant}
	\begin{center}
		\begin{tabular}{@{}ll ccc ccc ccc}
			\toprule
			\multicolumn{11}{c}{{\textbf{Homogeneous Error}}}\\
			\hline
			& & \multicolumn{3}{c}{{{\bfseries Normal}}}
			& \multicolumn{3}{c}{{{\bfseries Log-Normal}}}
			& \multicolumn{3}{c}{{{\bfseries Cauchy}}}\\
			\cmidrule(r){3-5}
			\cmidrule(lr){6-8}
			\cmidrule(l){9-11}
			n & method & mse  & PCD & $\delta_{0.5}$ & mse  & PCD  & $\delta_{0.5}$ & mse  & PCD & $\delta_{0.5}$\\
			\cmidrule(r){3-5}
			\cmidrule(lr){6-8}
			\cmidrule(l){9-11}
			100 & LS & 0.36 (0.011) & 89.8 & 0.29 & 1.65 (0.085) & 80.8 & 1.06 &  & 58.7 & 3.69 \\
			& P(0.5) & 0.57 (0.017) & 86.9 & 0.46 & 0.61 (0.026) & 86.4 & 0.55 & 1.31 (0.045) & 81.7 & 0.93 \\
			& P(0.25) & 0.65 (0.020) & 86.2 & 0.52 & 0.22 (0.008) & 91.7 & 0.20 & 4.67 (0.312) & 74.7 & 1.64 \\
			& Huber & 0.38 (0.012) & 89.5 & 0.30 & 0.45 (0.018) & 88.3 & 0.40 & 1.70 (0.060) & 79.5 & 1.14 \\
			200 & LS & 0.16 (0.004) & 92.9 & 0.14 & 0.74 (0.030) & 85.6 & 0.61 &  & 59.1 & 3.64 \\
			& P(0.5) & 0.25 (0.007) & 91.2 & 0.21 & 0.26 (0.008) & 90.7 & 0.24 & 0.52 (0.017) & 87.8 & 0.41 \\
			& P(0.25) & 0.30 (0.008) & 90.3 & 0.26 & 0.09 (0.003) & 94.8 & 0.08 & 1.69 (0.074) & 81.3 & 0.92 \\
			& Huber & 0.17 (0.005) & 92.8 & 0.14 & 0.19 (0.006) & 92.2 & 0.17 & 0.70 (0.022) & 86.2 & 0.53 \\
			400 & LS & 0.08 (0.002) & 95.1 & 0.06 & 0.36 (0.013) & 89.7 & 0.30 &  & 58.0 & 3.79 \\
			& P(0.5) & 0.12 (0.003) & 93.8 & 0.10 & 0.12 (0.003) & 93.8 & 0.10 & 0.22 (0.006) & 91.6 & 0.19 \\
			& P(0.25) & 0.14 (0.004) & 93.3 & 0.12 & 0.04 (0.001) & 96.5 & 0.03 & 0.63 (0.021) & 86.5 & 0.49 \\
			& Huber & 0.08 (0.002) & 95.0 & 0.07 & 0.09 (0.002) & 94.8 & 0.07 & 0.30 (0.009) & 90.3 & 0.26 \\
			800 & LS & 0.04 (0.001) & 96.5 & 0.03 & 0.18 (0.006) & 92.3 & 0.16 &  & 58.2 & 3.76 \\
			& P(0.5) & 0.06 (0.002) & 95.6 & 0.05 & 0.06 (0.002) & 95.6 & 0.05 & 0.10 (0.003) & 94.4 & 0.09 \\
			& P(0.25) & 0.07 (0.002) & 95.3 & 0.06 & 0.02 (0.001) & 97.5 & 0.02 & 0.29 (0.009) & 90.6 & 0.23 \\
			& Huber & 0.04 (0.001) & 96.4 & 0.03 & 0.04 (0.001) & 96.3 & 0.04 & 0.14 (0.004) & 93.2 & 0.12 \\
			\hline
			\multicolumn{11}{c}{{\textbf{Heterogenous Error}}}\\
			\hline
			& & \multicolumn{3}{c}{{{\bfseries Normal}}}
			& \multicolumn{3}{c}{{{\bfseries Log-Normal}}}
			& \multicolumn{3}{c}{{{\bfseries Cauchy}}}\\
			\cmidrule(r){3-5}
			\cmidrule(lr){6-8}
			\cmidrule(l){9-11}
			n & method & mse  & PCD & $\delta_{0.5}$ & mse  & PCD  & $\delta_{0.5}$ & mse  & PCD & $\delta_{0.5}$\\
			\cmidrule(r){3-5}
			\cmidrule(lr){6-8}
			\cmidrule(l){9-11}
			100 & LS & 1.45 (0.059) & 82.9 & 0.85 & 8.53 (0.784) & 72.4 & 2.01 &  & 54.9 & 4.22 \\
			& P(0.5) & 0.94 (0.034) & 85.6 & 0.61 & 1.29 (0.058) & 83.3 & 0.86 & 2.27 (0.132) & 78.9 & 1.24 \\
			& P(0.25) & 1.46 (0.051) & 81.5 & 0.96 & 1.78 (0.071) & 78.2 & 1.30 & 7.88 (0.422) & 68.1 & 2.46 \\
			& Huber & 0.89 (0.034) & 86.1 & 0.57 & 1.46 (0.067) & 81.7 & 0.99 & 3.28 (0.157) & 75.1 & 1.65 \\
			200 & LS & 0.84 (0.035) & 86.6 & 0.53 & 3.85 (0.358) & 77.6 & 1.43 &  & 55.9 & 4.09 \\
			& P(0.5) & 0.44 (0.016) & 90.0 & 0.29 & 0.60 (0.024) & 89.0 & 0.39 & 0.87 (0.034) & 86.3 & 0.56 \\
			& P(0.25) & 0.69 (0.025) & 87.0 & 0.49 & 0.75 (0.024) & 85.5 & 0.59 & 3.08 (0.179) & 75.3 & 1.58 \\
			& Huber & 0.43 (0.016) & 90.3 & 0.28 & 0.66 (0.025) & 87.7 & 0.47 & 1.32 (0.050) & 82.4 & 0.87 \\
			400 & LS & 0.44 (0.020) & 90.3 & 0.28 & 2.34 (0.393) & 82.4 & 0.95 &  & 55.9 & 4.09 \\
			& P(0.5) & 0.23 (0.009) & 92.9 & 0.16 & 0.28 (0.011) & 92.5 & 0.19 & 0.39 (0.015) & 90.8 & 0.26 \\
			& P(0.25) & 0.33 (0.011) & 91.0 & 0.23 & 0.36 (0.012) & 90.1 & 0.27 & 1.25 (0.048) & 82.8 & 0.82 \\
			& Huber & 0.22 (0.008) & 93.1 & 0.15 & 0.31 (0.012) & 91.7 & 0.21 & 0.60 (0.022) & 88.0 & 0.43 \\
			800 & LS & 0.23 (0.009) & 93.0 & 0.15 & 0.90 (0.057) & 86.2 & 0.60 &  & 56.3 & 4.03 \\
			& P(0.5) & 0.11 (0.004) & 95.0 & 0.07 & 0.14 (0.005) & 94.8 & 0.09 & 0.18 (0.006) & 93.6 & 0.12 \\
			& P(0.25) & 0.17 (0.006) & 93.7 & 0.12 & 0.18 (0.006) & 93.0 & 0.14 & 0.59 (0.017) & 87.3 & 0.44 \\
			& Huber & 0.10 (0.004) & 95.1 & 0.07 & 0.15 (0.006) & 94.2 & 0.11 & 0.29 (0.010) & 91.4 & 0.21 \\
			\bottomrule
		\end{tabular}
	\end{center}
\end{table}

We firstly notice that lsA-learning works much worse under the heterogeneous errors, while all other methods are generally less affected by the heterogeneity of the errors. When the baseline function is misspecified as in Model I, under the homogeneous normal errors, RR(H) works slightly better than lsA-learning, while $\mathrm{RR}(\rho_{0.25})$ works the worst. The difference in general is small. For the homogeneous log-normal errors, again RR(H) works the best, while $\mathrm{RR}(\rho_{0.5})$ and $\mathrm{RR}(\rho_{0.25})$ work slightly worse. Here lsA-learning has the worst performance. Under the homogeneous Cauchy errors, the lsA-learning is no longer consistent and work the worst. Both $\mathrm{RR}(\rho_{0.5})$ and RR(H) have good performance under the homogeneous Cauchy errors. When baseline function is correctly specified as in Model II, under homogeneous normal errors, lsA-learning performs the best. However, in this case RR(H) also has a very close performance. Under homogeneous log-normal errors, $\mathrm{RR}(\rho_{0.25})$ work the best and lsA-learning work the worst. Under homogeneous Cauchy errors, $\mathrm{RR}(\rho_{0.5})$ has the best performance and RR(H) has a close performance. lsA-learning is again not consistent. 

\subsection*{Examples with error terms interacted with treatment}
We consider the following model with p=2,
\begin{equation*}
Y_i=1 + 0.5\sin[\pi(X_{i1}-X_{i2})]+
0.25(1+X_{i1}+2X_{i2})^2+(A_i-\pi(\Xbf_{i}))\thetabf_0\trans\tilde{\Xbf}_i+\sigma(\Xbf_{i},A_i)\epsilon_i,
\end{equation*}
where $\Xbf_{i}=(X_{i1},X_{i2})\trans$, $\tilde{\Xbf}_i=(1,\Xbf_i\trans)\trans$, $\sigma(\Xbf_{i},A_i)=1+A_i d_0 X_{i1}^2$, $\thetabf_0\trans=(0.5,2,-1)$ and $X_{ik}$ are i.i.d. Uniform[-1,1]. We take linear forms for both the baseline and the contrast functions, where $\varphi(\Xbf;\gammabf)=\gammabf\trans\tilde{\Xbf}$, $C(\Xbf;\betabf)=\betabf\trans \Wbf$ and $\Wbf=(\tilde{\Xbf},X_{1}^2,X_{2}^2,X_{1}X_{2})$. $d_0=5$, 10 or 15. The error terms $\epsilon_i$ follows i.i.d. N(0,1) or Gamma(1,1)-1 distribution. The propensity scores $\pi(\cdot)$ are known, and we consider the non-constant case ($\pi(\Xbf_i)=\mathrm{logit}(\Xbf_{i1}-\Xbf_{i2})$) here. The simulation results are given in Table~\ref{table:interacted_nonconstant_ps}.

\begin{table}[ht]
	\setlength{\tabcolsep}{2.5pt}
	\caption{Summary results with non-constant propensity scores when errors interacted with treatment. Least square stands for lsA-learning. Pinball(0.5) stands for robust regression with pinball loss and parameter $\tau=0.5$. Pinball(0.25) stands for robust regression with pinball loss and parameter $\tau=0.25$.  Huber stands for robust regression with Huber loss, where parameter $\alpha$ is tuned automatically with R function rlm.}
	\label{table:interacted_nonconstant_ps}
	\begin{center}
		\begin{tabular}{@{}lll @{\hskip 0.5cm} rrr @{\hskip 0.5cm} rrr @{\hskip 0.5cm} rrr  @{\hskip 0.5cm} rrr}
			\toprule
			& & & \multicolumn{3}{c}{{{\bfseries Least Square}}} 
			& \multicolumn{3}{c}{{{\bfseries Pinball(0.5)}}} 
			& \multicolumn{3}{c}{{{\bfseries Pinball(0.25)}}} 
			& \multicolumn{3}{c}{{{\bfseries Huber}}}\\
			\cmidrule(r){4-6}
			\cmidrule(lr){7-9}
			\cmidrule(lr){10-12}
			\cmidrule(lr){13-15}
			Error & $d_0$ & n & $ \delta_{\mu}$ & $ \delta_{0.5}$ & $ \delta_{0.25}$ & $ \delta_{\mu}$ & $ \delta_{0.5}$ & $ \delta_{0.25}$ & $ \delta_{\mu}$ & $ \delta_{0.5}$ & $ \delta_{0.25}$ 
			& $ \delta_{\mu}$ & $ \delta_{0.5}$ & $ \delta_{0.25}$ \\
			\midrule
			Normal & 5 & 100 & 0.19 & 0.19 & 0.36 & 0.20 & 0.20 & 0.34 & 0.30 & 0.30 & 0.23 & 0.17 & 0.17 & 0.33 \\ 
			&  & 200 & 0.11 & 0.11 & 0.28 & 0.13 & 0.13 & 0.25 & 0.21 & 0.21 & 0.12 & 0.11 & 0.11 & 0.23 \\ 
			&  & 400 & 0.06 & 0.06 & 0.21 & 0.08 & 0.08 & 0.17 & 0.17 & 0.17 & 0.06 & 0.06 & 0.06 & 0.16 \\ 
			&  & 800 & 0.03 & 0.03 & 0.16 & 0.06 & 0.06 & 0.11 & 0.15 & 0.15 & 0.04 & 0.04 & 0.04 & 0.11 \\ 
			& 10 & 100 & 0.29 & 0.29 & 0.93 & 0.24 & 0.24 & 0.88 & 0.44 & 0.44 & 0.50 & 0.24 & 0.24 & 0.88 \\ 
			&  & 200 & 0.21 & 0.21 & 0.92 & 0.18 & 0.18 & 0.84 & 0.37 & 0.37 & 0.34 & 0.17 & 0.17 & 0.83 \\ 
			&  & 400 & 0.13 & 0.13 & 0.87 & 0.14 & 0.14 & 0.75 & 0.32 & 0.32 & 0.25 & 0.12 & 0.12 & 0.75 \\ 
			&  & 800 & 0.08 & 0.08 & 0.80 & 0.11 & 0.11 & 0.64 & 0.28 & 0.28 & 0.21 & 0.08 & 0.08 & 0.64 \\ 
			& 15 & 100 & 0.35 & 0.35 & 1.58 & 0.27 & 0.27 & 1.51 & 0.53 & 0.53 & 0.72 & 0.26 & 0.26 & 1.51 \\ 
			&  & 200 & 0.29 & 0.29 & 1.56 & 0.21 & 0.21 & 1.47 & 0.50 & 0.50 & 0.54 & 0.20 & 0.20 & 1.47 \\ 
			&  & 400 & 0.21 & 0.21 & 1.58 & 0.17 & 0.17 & 1.37 & 0.48 & 0.48 & 0.39 & 0.15 & 0.15 & 1.38 \\ 
			&  & 800 & 0.14 & 0.14 & 1.52 & 0.14 & 0.14 & 1.26 & 0.45 & 0.45 & 0.31 & 0.12 & 0.12 & 1.27 \\ 
			Gamma & 5 & 100 & 0.18 & 0.21 & 0.34 & 0.20 & 0.17 & 0.24 & 0.28 & 0.18 & 0.14 & 0.18 & 0.15 & 0.21 \\ 
			&  & 200 & 0.10 & 0.14 & 0.29 & 0.13 & 0.10 & 0.15 & 0.21 & 0.11 & 0.07 & 0.11 & 0.07 & 0.13 \\ 
			&  & 400 & 0.06 & 0.09 & 0.23 & 0.10 & 0.05 & 0.10 & 0.18 & 0.07 & 0.04 & 0.07 & 0.03 & 0.08 \\ 
			&  & 800 & 0.03 & 0.06 & 0.19 & 0.08 & 0.03 & 0.06 & 0.16 & 0.06 & 0.03 & 0.06 & 0.02 & 0.07 \\ 
			& 10 & 100 & 0.27 & 0.34 & 0.90 & 0.28 & 0.25 & 0.67 & 0.46 & 0.21 & 0.33 & 0.28 & 0.22 & 0.62 \\ 
			&  & 200 & 0.20 & 0.32 & 0.94 & 0.21 & 0.16 & 0.57 & 0.43 & 0.14 & 0.24 & 0.21 & 0.13 & 0.49 \\ 
			&  & 400 & 0.13 & 0.27 & 0.92 & 0.16 & 0.09 & 0.46 & 0.38 & 0.10 & 0.18 & 0.15 & 0.06 & 0.39 \\ 
			&  & 800 & 0.08 & 0.21 & 0.85 & 0.13 & 0.05 & 0.40 & 0.35 & 0.07 & 0.16 & 0.13 & 0.03 & 0.35 \\ 
			& 15 & 100 & 0.34 & 0.55 & 1.49 & 0.33 & 0.37 & 1.09 & 0.59 & 0.25 & 0.46 & 0.33 & 0.33 & 0.99 \\ 
			&  & 200 & 0.27 & 0.54 & 1.57 & 0.26 & 0.29 & 1.00 & 0.60 & 0.19 & 0.31 & 0.27 & 0.23 & 0.85 \\ 
			&  & 400 & 0.19 & 0.50 & 1.56 & 0.20 & 0.21 & 0.88 & 0.61 & 0.15 & 0.21 & 0.22 & 0.14 & 0.70 \\ 
			&  & 800 & 0.12 & 0.47 & 1.58 & 0.17 & 0.14 & 0.76 & 0.62 & 0.15 & 0.18 & 0.19 & 0.09 & 0.63 \\ 
			\bottomrule
		\end{tabular}
	\end{center}
\end{table}

Based on Theorem 6 of the main paper, $\delta_{\mu}$ column for the lsA-learning method in Table~\ref{table:interacted_nonconstant_ps} converges to 0 as sample size increases. Under Normal error terms, we have $\delta_{0.5}=\delta_{\mu}$. Thus, the $\delta_{0.5}$ column for the lsA-learning method under Normal error also converges to 0. All other columns in Table \ref{table:interacted_nonconstant_ps} converge to a positive constant instead of 0 as sample size goes to infinity. $\mathrm{RR}(H)$ and $\mathrm{RR}(\rho_{0.5})$ perform similarly in Table \ref{table:interacted_nonconstant_ps}. We also find even though lsA-learning outperform all other methods in $\delta_{\mu}$ when sample size is large. It may be worse than $\mathrm{RR}(\rho_{0.5})$ and $\mathrm{RR}(H)$ when sample size is small due to the fact that lsA-learning is inefficient under the heteroscedastic or skewed errors. Last, we find that lsA-learning, $\mathrm{RR}(\rho_{0.5})$  and $\mathrm{RR}(\rho_{0.25})$  perform best at the columns $\delta_{\mu}$, $\delta_{0.5}$ and $\delta_{0.25}$ accordingly. The reason is given in the Remark under Theorem 2 of the main paper.
\bibliography{OTR}
\bibliographystyle{apalike}

\end{document}